\setlist{topsep=2pt, itemsep=2pt, partopsep=2pt, parsep=2pt}
\newcommand{\svc}{\tuple}
\newcommand{\resp}{resp.}
\newcommand{\ar}{\mathsf{ar}}
\newcommand{\bind}[2]{{(#1) #2}}
\newcommand{\denop}[1]{\denote{\oper{#1}}}
\newcommand{\dn}{\denote}
\newcommand{\ruleref}[1]{\hyperref[rule:#1]{\ref*{poly:#1} \eqref{rule:#1}}}
\newcommand{\msigty}{{\Sigma^*_{\sf{ty}}}}
\newcommand{\msigtmsubst}{{\Sigma^\circledast_{\sf{tm}}}}
\newcommand{\opty}[1][]{{O^{#1}_{\sf{ty}}}}
\newcommand{\mopty}{\opty[*]}
\newcommand{\optm}[1][]{{O^{#1}_{\sf{tm}}}}
\newcommand{\moptm}{\optm[*]}
\newcommand{\moptmsubst}{\optm[\circledast]}
\renewcommand{\disc}{\overline}
\renewcommand{\precomp}[1]{(-)#1}
\renewcommand{\sf}{\mathsf}
\newcommand{\freecart}[1]{\cat{Cart}(#1)}
\newcommand{\pbf}{\mathbf}
\newcommand{\textcite}[1]{\citeauthor*{#1}~\cite{#1}}
\def\unnumfootnote{\xdef\@thefnmark{}\@footnotetext}
\begin{document}

\title{Algebraic models of simple type theories}
\subtitle{A polynomial approach}

\author{Nathanael Arkor}
\orcid{0000-0002-4092-7930}
\affiliation{
  \department{Department of Computer Science and Technology}
  \institution{University of Cambridge}
}
\email{}

\author{Marcelo Fiore}
\orcid{0000-0001-8558-3492}
\affiliation{
  \department{Department of Computer Science and Technology}
  \institution{University of Cambridge}
}
\email{}

\begin{abstract}
We develop algebraic models of simple type theories, laying out a framework that extends universal algebra to incorporate both algebraic sorting and variable binding. Examples of simple type theories include the unityped and simply-typed $\lambda$-calculi, the computational $\lambda$-calculus, and predicate logic. 

Simple type theories are given models in presheaf categories, with structure specified by algebras of polynomial endofunctors that correspond to natural deduction rules. Initial models, which we construct, abstractly describe the syntax of simple type theories.
Taking substitution structure into consideration, we further provide sound and complete semantics in structured cartesian multicategories.  This development generalises Lambek's correspondence between the \stlc{} and cartesian-closed categories, to arbitrary simple type theories.
\end{abstract}

\maketitle

\unnumfootnote{This is a preprint of \url{https://doi.org/10.1145/3373718.3394771}, published in \emph{Proceedings of the 35th Annual ACM/IEEE Symposium on Logic in Computer Science (LICS '20)}.}

\section{Introduction}

Universal algebra is a framework for describing a class of mathematical structures: precisely those equipped with monosorted algebraic operations satisfying equational laws. Though such structures are prevalent, there are nevertheless many structures of interest in computer science that do not fit into this framework. In particular, notions of type theory, despite being presented in an algebraic style, cannot be expressed as universal algebraic structures. Herein, we follow the tradition of \emph{algebraic type theory} \cite{fiore2008algebraic, fiore2011algebraic} in describing type theories as the extension of universal algebra to a richer setting, \viz{} that of sorting (\ie{} typing) and variable binding.

There are several reasons to be interested in extending universal algebra in this manner. From the perspective of programming language theory, this is a convenient framework for abstract syntax:
the structure of programming
languages, disregarding the superficial details of concrete syntax.
From a categorical perspective, algebraic type theory provides a precise correspondence between syntactic and semantic structure: the rules of a type theory 
give
a conveniently manipulable internal language for reasoning about a categorical structure, which, in turn, models the theory.
The classical result due to Lambek~\cite{lambek1980from}, that the \stlc{} is an internal language for cartesian-closed categories, is a representative example of such a correspondence.

In this paper, we consider the syntax and semantics of \emph{simple type theories}: algebras with sorted binding operations, whose type structure itself is (nonbinding) algebraic. Simple type theories encompass many familiar examples beyond algebraic theories, including the unityped and simply-typed \lci{}, the \clc{} \cite{moggi1988computational}, and predicate logic. Similar extensions to universal algebra have been explored in the past \cite{birkhoff1970heterogeneous, fiore1999abstract, fiore2010second, fiore2013multiversal}, but previous approaches have proven difficult to extend to the dependently-sorted setting that is necessary to describe more sophisticated type theories such as \MLTT{} \cite{martin1984intuitionistic}. We describe a new approach, combining the theories of abstract syntax \cite{fiore1999abstract} and polynomial functors \cite{gambino2013polynomial}, which we feel is an appropriate setting 
to consider dependently-sorted extensions.

\subsection*{Philosophy}

Type theories are typically presented by systems of natural deduction rules describing the inductive structure of the theory. Models of the type theory will therefore have corresponding structure. The observation that motivates our approach may be summarised by the following thesis.

\begin{center}\em
    Natural deduction rules are syntax for polynomials.
\end{center}

\noindent In this paper we give an exposition of this idea, describing a polynomial approach to the semantics of simple type theories. Concretely, we will show how the natural deduction rules presenting the algebraic structure of a simple type theory, which are described precisely by a class of arities, induce polynomial functors in presheaf categories whose algebras are exactly models of the type theory. In particular, the initial algebras are the syntactic models, whose terms are inductively generated from the rules. This provides a correspondence between type theoretic and categorical structure. To build intuition for the general setting, we will frame the classical example of the \stlc{} in this new perspective.

The relationship between polynomials and the algebraic structure of type theories was first proposed by \textcite{fiore2012discrete}, in the context of generalised polynomial functors between presheaf categories. Though our approach is similarly motivated, the setting is different: we consider traditional polynomial functors between slice categories. This is a setting that has been more widely studied \cite{abbott2003categories, gambino2013polynomial} and one we suggest also extends more readily to modelling dependent type theories: \citeauthor{awodey2018polynomial} \cite{awodey2018natural, awodey2018polynomial, newstead2018algebraic}, for instance, have also considered a relationship between polynomial pseudomonads and natural models of type theory~\cite{awodey2018natural}. Their setting, however, is entirely semantic, and one in which the significance of polynomials in the structure of natural deduction rules is not apparent.

In this framework, we consider two classes of models: models of 
\emph{simply typed syntax}~(\Cref{sec:models-syntax}), and models of 
\emph{simple type theories}~(\Cref{sec:models-theories}). Both classes of models have algebraic type structure and multisorted binding (\ie{}~second-order) algebraic term structure, but simple type theories extend syntax in two ways: while \emph{syntax} here refers to those terms solely built inductively from natural deduction rules, \emph{type theories} additionally have an associated notion of (capture-avoiding) substitution: a variable in a term may be replaced by a term of the same type, taking care not to bind any free variables. Typically, a syntax gives rise to a type theory, as one can add a substitution operation that commutes with the operators of the syntax. For this reason, many models of universal algebra do not draw a distinction between syntax and type theory: for instance, Lawvere theories~\cite{lawvere1963functorial} have a built-in notion of substitution, given by composition of morphisms. However, it is useful to consider these two notions separately: substitution gives rise to rich structure that one can only observe by treating it explicitly, for example the substitution lemma~(\Cref{thm:substitution-lemma}) that is ubiquitous in treatments of type theory.

We also consider only type theories (and not syntax) to be equational, as modelling equations involves identifying terms that are syntactically distinct.

\subsection*{Contributions}

The main contributions of this paper are the following.
\begin{enumerate}
    \item A new perspective on natural deduction rules, presenting natural deduction rules for formation, introduction and elimination, as the syntax for polynomials in presheaf categories.
    \item A general definition of models of simply typed syntax and simple type theories.
    \item Initiality theorems, giving a construction of the initial models of simply typed syntax and simple type theories.
    \item A correspondence
    between models of simple type theories and classifying multicategories, generalising the classical Lambek correspondence between the \stlc{} and cartesian-closed categories.
\end{enumerate}

This work provides a basis for our ongoing development of algebraic dependent type theory.

\subsection*{Organisation of the paper}

We build up the definition of a simple type theory in parts, presenting the syntax and semantics in conjunction.

\Cref{sec:types} describes the monosorted nonbinding algebraic structure of types, which is standard from universal algebra.
\Cref{sec:contexts} considers variable contexts and introduces models thereof.
\Cref{sec:terms} is the central contribution of the paper and explains how the multisorted binding algebraic structure on terms may be presented by syntax for polynomials corresponding to natural deduction rules.
\Cref{sec:models-syntax} defines categories of models of simply typed syntax and gives a construction of the initial model (\Cref{thm:initiality-syntax}).
\Cref{sec:substitution} introduces substitution structure on terms and establishes a substitution lemma (\Cref{thm:substitution-lemma}).
\Cref{sec:equations} describes equations on terms, which crucially relies on the substitution structure from the preceding section.
\Cref{sec:models-theories} defines categories of models of simple type theories, which extend syntax by having substitution and equational structure, and 
leads to a construction
of the initial model (\Cref{thm:initiality-theory}).
\Cref{sec:classifying-multicategories} demonstrates how models of simple type theories induce structured cartesian multicategories, establishing a generalised Lambek correspondence (\Cref{thm:equivalence-multicategories} and \Cref{cor:lambek}).

\section{Simple types}
\label{sec:types}

We consider types with monosorted nonbinding algebraic structure \`a la universal algebra. The type constructors of the \stlc{} are examples of such algebraic structure; consider the following formation rules.
\[\begin{prooftree}
\Infer0[$\oper{Unit}$-\textsc{form}]{\oper{Unit}\ \oper{type}}
\end{prooftree}\]
\[\begin{prooftree}
\Hypo{A\ \oper{type}}
\Hypo{B\ \oper{type}}
\Infer2[$\oper{Prod}$-\textsc{form}]{\oper{Prod}(A, B)\ \oper{type}}
\end{prooftree}\]
\[\begin{prooftree}
\Hypo{A\ \oper{type}}
\Hypo{B\ \oper{type}}
\Infer2[$\oper{Fun}$-\textsc{form}]{\oper{Fun}(A, B)\ \oper{type}}
\end{prooftree}\]
These types may be modelled by a set $S$ of sorts with a function expressing the denotations of the type constructors. Base types are described by nullary type constructors, as in universal algebra.
\[1 + S^2 + S^2 \xto{[\denop{Unit}, \denop{Prod}, \denop{Fun}]} S\]
This structure is an algebra for the endofunctor on $\Set$ mapping $S \mapsto S^0 + S^2 + S^2$. This is an example of a \emph{polynomial functor} on $\Set$. Polynomial functors are a categorification of the notion of polynomial functions and similarly represent ``sums of products of variables''. Just as a polynomial function is presented by a list of coefficients, polynomial functors are presented by \emph{polynomials}, which are diagrams of the following shape.
\[I \xfrom{s} A \xto{f} B \xto{t} J\]
Such a polynomial in $\Set$ induces a polynomial functor $\Set/I \to \Set/J$, given by the following, where $B_j = t\inv(j)$ and $A_b = f\inv(b)$.
\[(X_i \mid i \in I) \mapsto (\Sigma_{b \in B_j} \Pi_{a \in A_b} X_{s(a)} \mid j \in J)\]
This is slightly more sophisticated than the traditional sum of products: in particular, we also have a notion of reindexing. Clear introductions to polynomial functors are given in \textcite{weber2011polynomials, gambino2013polynomial}.

Type constructors correspond generally to polynomials in $\Set$. Consider the $\oper{Prod}$ type constructor, for instance. It induces the following very simple polynomial.
\[1 \from {1 + 1}
\to 1 \to 1\]
Each summand in the second component corresponds to a premiss in the formation rule. Here, every morphism is trivial, which is a consequence of types being monosorted. We will see more illustrative examples later. The polynomial induces the polynomial functor $(-) \mapsto (-) \times (-)$, algebras for which are sets $S$ with a function $\denop{Prod}: S^2 \to S$ as intended.

Type operators (\ie{} formation rules) are described generally in terms of arities.

\begin{notation}
Let $M: \Set \to \Set$ be the free monoid endofunctor. For any functor $F: \Set \to \Set$, define $F\kleene \defeq M \comp F$.
\end{notation}

\begin{definition}
\label{def:arity}
We define $\ar_k: \Set \to \Set$, for $k \in \nat$, inductively.
\vspace{-1.6ex}
\begin{align*}
    \ar_0 & \defeq \Id{} \\
    \ar_{k + 1} & \defeq {\ar_k}\kleene \times \ar_k
\end{align*}
We call $\ar_k(S)$ the set of \emph{$S$-sorted $k$\ssth{}-order arities}.
\end{definition}

\begin{notation}
We denote by $A_1, \ldots, A_n \to A$ the $S$-sorted first-order arity 
$\big((A_1, \ldots, A_n), A\big) \in \ar_1(S)$. We identify nullary arities with constants and omit the arrow ($\to$) when $n = 0$.
\end{notation}

\begin{notation}
We denote by $\underline{n}$ the set $\{1, \ldots, n\}$, for $n \in \nat$. In particular, $\underline{0}$ is the empty set.
\end{notation}

First-order arities correspond to the operators of (multisorted) universal algebra \cite{birkhoff1970heterogeneous}, though in this setting we are solely concerned with monosorted operators. Specifically, our type operators are represented by $\{*\}$-sorted first-order arities, where $*$ is the unique kind.

In general, an $n$-ary type operator
\[\oper{O} : \underbrace{*, \ldots, *}_{n \in \Nat} \to *\]
corresponds to a type formation rule of the form
\[\begin{prooftree}
\Hypo{A_1\ \oper{type}}
\Hypo{\cdots}
\Hypo{A_n\ \oper{type}}
\Infer3[$\oper{O}$-\textsc{form}]{\oper{O}(A_1, \ldots, A_n)\ \mathsf{type}}
\end{prooftree}\]
where $A_1, \ldots, A_n$ are type metavariables, universally quantified over all types.

An $n$-ary type operator induces a polynomial in $\Set$
\[1 \from \underline{n} \to 1 \to 1\]
intuitively the following.
\[\scalebox{0.9}{
$\{*\} \from \{A_1 : *\} + \dots + \{A_n : *\} \to \{\oper{O}(A_1, \ldots, A_n) : *\} \to \{*\}$
}\]
An algebra for the induced polynomial functor is given explicitly by a set $S$ and a function $\denop{O}: S^n \to S$. We collect the arities into a single \emph{signature}, which completely describes the inductive structure of the types.

\begin{definition}
A \emph{type operator signature}, denoted $\opty$, is given by a list of $\{*\}$-sorted first-order arities.
\end{definition}

\begin{notation}
To aid readability, we will use the following informal notation throughout. The $\rhd$ symbol separates the type metavariables
from a type, term or equation involving them. For example, the notation \[A, B : * \rhd \oper{Prod}(A, B) : *\] specifies a $\{*\}$-sorted first-order arity $*, * \to *$.
\end{notation}

\begin{example}[Formation rules for the \stlc{}]
Let $I \in \Set$ be a finite set of base types.
    \begin{align*}
        & \rhd \oper{Base}_i : * & (i \in I)  \\
        & \rhd \oper{Unit} : * \\
        A, B : * & \rhd \oper{Prod}(A, B) : * \\
        A, B : * & \rhd \oper{Fun}(A, B) : *
    \end{align*}
\end{example}

A type operator signature induces a polynomial (\resp{} polynomial functor), given by taking the coproduct of the polynomials (\resp{} polynomial functors) induced by its elements.

\begin{notation}
We will denote by $\opty$ both a type operator signature and the polynomial
functor $\opty: \Set \to \Set$ it induces.
\end{notation}

The polynomial functor $\opty$ induces a monad giving the closure of a set of type metavariables under the operators of the signature.

\begin{notation}
Given a type operator signature $\opty$, we denote by $\mopty$ the free $\opty$-algebra monad on $\Set$.
\end{notation}

The Eilenberg--Moore category of the monad $\mopty$ is isomorphic to the category of $\opty$-algebras.

\subsection{Equations on types}
\label{subsec:equations-on-types}

We permit types to be identified by means of equational laws. For any $m \in \nat$, the set $\mopty(\underline{m})$ may be considered syntactically as the set of types parameterised by $m$ type metavariables. Each element of $\underline{m}$ acts as a placeholder, which one can substitute for a concrete type,
by the freeness of $\mopty(\underline{m})$ as in the following. A morphism $\mathbf A$ as below corresponds to a family of sorts $(A_i)_{1 \leq i \leq m} \in S^m$.
%
\begin{equation}
\label{eq:type-equation-substitution}
\begin{tikzcd}
	{\opty(\mopty(\underline{m}))} && {\opty(S)} \\
	{\mopty(\underline{m})} && {S} \\
	& {\underline{m}}
	\arrow["{\pbf A}"', from=3-2, to=2-3]
	\arrow["{\eta}", from=3-2, to=2-1]
	\arrow["{\Psi_{\pbf A}}"', from=2-1, to=2-3, dashed]
	\arrow["{\denop{ty}^*}"', from=1-1, to=2-1]
	\arrow["{\denop{ty}}", from=1-3, to=2-3]
	\arrow["{\opty(\Psi_{\pbf A})}", from=1-1, to=1-3]
\end{tikzcd}
\end{equation}

\begin{definition}
An \emph{$\opty$-type equation} is given by a pair 
$\big( {m \in \nat} , {(L,R) \in \mopty(\underline{m})^2} \big)$, representing an equation between types 
$L \jeq R$ parameterised by $m$ metavariables.
\end{definition}

An $\opty$-type equation induces a term monad identifying the terms in the $(L,R)$ pair \cite{fiore2009construction}, intuitively given by quotienting $\mopty$ by the equation.

\begin{definition}
An \emph{equational type signature}, typically denoted $\sigty$, is given by a type operator signature $\opty$ and a list $\eqty$ of $\opty$-type equations.
\end{definition}

\begin{definition}
Given an equational type signature $\sigty = (\opty, \eqty)$, a \emph{$\sigty$-algebra} is an $\opty$-algebra satisfying the equations of $\eqty$.
\end{definition}

\begin{notation}
Given an equational type signature $\sigty$, we denote by $\msigty$ the associated term monad on $\Set$.
\end{notation}

The term monad associated to an equational type signature $\big(\opty, [\,]\big)$ is the free $\opty$-monad $\mopty$. For any list of \text{$\opty$-type} equations $\eqty$, there is a canonical quotient monad morphism $\mopty \epito \msigty$.

\begin{example}[Unityped $\lambda$-calculus]
In the unityped $\lambda$-calculus there is a single type constant $\oper{D} : *$ and a single type constructor $\oper{Fun} : *, * \to *$, where function types are identified with the base constant: $\rhd\ \oper{D} \jeq \oper{Fun}(\oper{D}, \oper{D})$.
\end{example}


\section{Contexts}
\label{sec:contexts}

Type theories have a notion of \emph{(variable) context}, explicitly quantifying the free variables that may appear in a term. Here, we take the contexts of simple type theories to be cartesian: intuitively, lists of typed variables, admitting exchange, weakening, and contraction. Cartesian context structures model the structure of such contexts.

\begin{definition}
Given an equational type signature $\sigty = (\opty, \eqty)$, a \emph{cartesian $\sigty$-typed context structure} for an algebra $\denop{ty}: \msigty(S) \to S$ consists of
\begin{itemize}
    \item a small category $\sct C$, the \emph{category of contexts}, with a specified terminal object $\epsilon$, the \emph{empty context};
    \item a functor $\svc{-}: \disc S \to \sct C$, embedding sorts as \emph{single-variable contexts}, where $\disc S$ denotes the discrete category on a set $S$;
    \item for all $\Gamma \in \sct C$ and $A \in S$, a specified product $\Gamma \times \svc{A}$, \emph{context extension} of $\Gamma$ by a variable of sort $A$.
\end{itemize}
\end{definition}

\begin{notation}
We write $\freecart{S}$ for the free strict cartesian category on a set $S$, given concretely by the opposite of the comma category 
$(\FinSet \injto \Set) \comma (S : \termcat \to \Set)$,
where $\FinSet$ is the skeleton of the category of finite sets and functions.
\end{notation}

\begin{example}
Every algebraic theory \cite{adamek2010algebraic} (that is, a cartesian category) $\sct C$ is an example of a cartesian $\Id{|\sct C|}$-typed context structure (in fact, one closed under concatenation, rather than just extension).
\end{example}

\begin{definition}
\label{def:context-structure-hom}
A \emph{homomorphism of cartesian $\sigty$-typed context structures} from $(\sct C, S) \to (\sct C', S')$ consists of
\begin{itemize}
    \item a functor $H : \sct C \to \sct C'$;
    \item a $\msigty$-algebra homomorphism $h: S \to S'$,
\end{itemize}
such that the following diagram commutes.
%
\[\begin{tikzcd}
	{\sct C \times \disc S} && {\sct C' \times \disc{S'}} \\
	{\sct C \times \sct C} && {\sct C' \times \sct C'} \\
	{\sct C} && {\sct C'} \\
	& {\termcat}
	\arrow["{H}"', from=3-1, to=3-3]
	\arrow["{\epsilon}", from=4-2, to=3-1]
	\arrow["{\epsilon'}"', from=4-2, to=3-3]
	\arrow["{\times}"', from=2-1, to=3-1]
	\arrow["{\id{} \times \svc{-}}"', from=1-1, to=2-1]
	\arrow["{H \times \disc h}", from=1-1, to=1-3]
	\arrow["{\id{} \times \svc{-}'}", from=1-3, to=2-3]
	\arrow["{\times}", from=2-3, to=3-3]
\end{tikzcd}\]
\end{definition}

Cartesian $\sigty$-typed context structures and their homomorphisms form a category.

\begin{proposition}
\label{prop:initial-cartesian-typed-context-structure}
There is a left-adjoint free functor taking sets $S$ to the free cartesian \text{$\sigty$-typed} context structure on $S$, given by $\freecart{\msigty(S)}$ with $\svc{-}$ the canonical embedding.
\end{proposition}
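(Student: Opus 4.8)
The plan is to realise the free functor as a composite of two adjunctions. Write $U$ for the forgetful functor sending a cartesian $\sigty$-typed context structure $(\sct C, S)$ to its underlying set of sorts $S$, and a homomorphism $(H, h)$ to the function $h$. It factors as $U = U_1 \comp U_2$, where $U_2$ discards the category of contexts, retaining the $\msigty$-algebra $\denop{ty} : \msigty(S) \to S$, and $U_1$ is the underlying-set functor on $\msigty$-algebras. It therefore suffices to construct left adjoints $F_1 \dashv U_1$ and $F_2 \dashv U_2$, since then $F \defeq F_2 \comp F_1 \dashv U$. For $F_1$ we take the free $\msigty$-algebra functor $S \mapsto \msigty(S)$ from \Cref{sec:types}, with unit the monad unit $\eta$; this is a left adjoint because $\msigty$ is a monad on $\Set$.

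The content is in $F_2 \dashv U_2$. Given a $\msigty$-algebra $S$, set $F_2(S) \defeq (\freecart{S}, S)$, taking the category of contexts to be the free strict cartesian category on the set $S$, with empty context the empty list, $\svc{-}$ the canonical embedding of $S$ as the length-one contexts, and context extension $\Gamma \times \svc{A}$ the (strict, hence specified) finite-product structure of $\freecart{S}$; this is at once a cartesian $\sigty$-typed context structure. For the universal property I would show that, for every cartesian $\sigty$-typed context structure $(\sct C', S')$ and every $\msigty$-algebra homomorphism $h : S \to S'$, there is a unique homomorphism $(H, h) : (\freecart{S}, S) \to (\sct C', S')$ lying over $h$. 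On objects $H$ is forced: every object of $\freecart{S}$ is obtained from the empty list by iterated context extension, and the commuting square of \Cref{def:context-structure-hom} requires $H(\epsilon) = \epsilon'$ and $H(\Gamma \times \svc{A}) = H(\Gamma) \times \svc{hA}$, so $H$ must send $[A_1, \dots, A_n]$ to the iterated specified product $\epsilon' \times \svc{hA_1} \times \cdots \times \svc{hA_n}$ of $\sct C'$. A morphism of $\freecart{S}$ is concretely a type-respecting tuple of projections between such iterated context extensions; since a homomorphism must preserve the specified products qua products — hence their projections and the induced mediating morphisms — $H$ is likewise forced on morphisms, sending such a tuple of projections to the corresponding tuple of iterated projections in $\sct C'$. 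It then remains to check that this prescription is functorial (identities are immediate; composition uses the universal property of the products of $\sct C'$) and does satisfy the square of \Cref{def:context-structure-hom}, and that the resulting correspondence is natural in $(\sct C', S')$.

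Composing, $F = F_2 \comp F_1$ sends $S$ to $F_2(\msigty(S)) = (\freecart{\msigty(S)}, \msigty(S))$ with $\svc{-}$ the canonical embedding — exactly the stated free construction — and its unit at $S$ is the monad unit $\eta_S : S \to \msigty(S)$. The main obstacle is the construction of $F_2$ together with the verification of its universal property: the delicate point is that a context structure comes equipped only with a specified terminal object and specified products $\Gamma \times \svc{A}$ by single-variable contexts, not with an a priori choice of all finite products, so one must confirm that iterating the available specified products coherently interprets every finite product of single-variable contexts, making $H$ a well-defined structure-preserving functor and determining it uniquely.
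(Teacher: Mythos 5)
The paper states this proposition without proof, so there is nothing to compare against directly; your factorisation of the forgetful functor through $\msigty$-algebras and the adjunction $F_2 \comp F_1$ land exactly on the construction $\freecart{\msigty(S)}$ with the canonical embedding named in the statement, and the verification is sound. The only point worth making explicit is that your uniqueness argument for $H$ on morphisms reads \Cref{def:context-structure-hom} as requiring preservation of the specified products together with their projections (not merely the object assignment and the action $\gamma \mapsto \gamma \times \id{}$ forced by the commuting square of functors), which is the reading of the homomorphism condition needed for the universal arrow to be unique.
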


In particular, the free cartesian $\sigty$-typed context structure on $\emptyset$ is the initial object.

\section{Terms}
\label{sec:terms}

We follow the tradition of abstract syntax, initiated in \textcite{fiore1999abstract}, of representing models of terms as presheaves over categories of contexts. In particular, for a cartesian $\sigty$-typed context structure $\sct C$, we consider presheaves $T : \sct C\op \to \Set$ as sets of terms, indexed by their context. For each context $\Gamma \in \sct C\op$, $T(\Gamma)$ is to be regarded as the set of terms with variables in $\Gamma$; while a morphism ${\rho : \Gamma \to \Gamma'}$ in $\sct C\op$, representing a context renaming, induces a mapping $T(\rho) : T(\Gamma) \to T(\Gamma')$ between terms in different contexts.

\begin{notation}
We use the same symbol for a set $S$ (\resp{} function $h: S \to S'$) and any constant presheaf on $S$ (\resp{} any constant natural transformation on $h$).
\end{notation}

The set of sorts $S$ embeds into $\spsh C$ as a constant presheaf: intuitively a presheaf of types that do not depend on their context. In this light, a natural transformation $\tau : T \to S$ in $\spsh C$ is to be regarded as an assignment of types to terms that respects context renaming. The slice category $\spsh C/S$ is thus an appropriate setting for considering typed terms in context. (Note that we work in the fibred setting, rather than the equivalent indexed setting of \textcite{fiore2002semantic}.)

\begin{definition}
A \emph{typed term structure} for a cartesian \text{$\sigty$-typed} context structure $(\sct C, S)$ is an object of $\spsh C/S$, concretely
\begin{itemize}
    \item a presheaf $T$ in $\spsh C$, the \emph{terms};
    \item a natural transformation $\tau: T \to S$, the \emph{assignment of a type} for each term.
\end{itemize}
The type of any term $t \in T(\Gamma)$ is therefore given by $\tau_\Gamma(t)$ (\cf{}~the view taken in \textcite{fiore2012discretetalk} and \citeauthor{awodey2018natural}'s natural models~\cite{awodey2018natural}).
\end{definition}

\begin{example}
\label{eg:presheaf-of-variables}
The \emph{presheaf of variables} for a cartesian \text{$\sigty$-typed} context structure $(\sct C, S)$ forms a typed term structure $\nu : V \to S$ given by the following, where $\yonem$ denotes the Yoneda embedding.
\begin{align*}
    V & \defeq \coprod_{A \in S} \yonem \svc{A} & \nu(\tuple{A, \rho}) & \defeq A
\end{align*}
\end{example}

The presheaf of variables is so called because, for any context $\Gamma$, the set $V(\Gamma)$ is to be regarded as the variables in $\Gamma$. We note that, for all presheaves $X \in \spsh C$ and $A \in S$, one has $X^{V_A} \equiv X(- \times \svc A)$, illustrating that exponentiation by $V_A$ is the same as context extension \cite{fiore1999abstract} (in turn demonstrating that context extension is polynomial).

\begin{proposition}
For all $n \in \nat$, the morphism  ${\nu^n\!:\!V^n \to S^n}$ is representable.
\end{proposition}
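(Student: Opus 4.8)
The plan is to show that $\nu^n \colon V^n \to S^n$ is representable as a morphism in $\spsh{C}/S^n$ — that is, every pullback of $\nu^n$ along a map from a representable is again representable — by reducing it to the case $n = 1$ and then computing directly.

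First I would treat the base case $n = 1$, namely that $\nu \colon V \to S$ is representable. Unwinding \Cref{eg:presheaf-of-variables}, we have $V = \coprod_{A \in S} \yonem\svc{A}$ with $\nu$ sending the $A$-component to the constant $A$. A generalised element of $S^1 = S$ over a representable $\yonem\Gamma$ is, by the constancy of $S$, simply a choice of sort $A \in S$ (the Yoneda lemma makes $S(\Gamma) = S$). Pulling back $\nu$ along $\yonem\Gamma \to S$ picks out the single summand $\yonem\svc{A}$, so the pullback is $\yonem\svc{A}$, which is representable. Hence $\nu$ is a representable morphism, with the representing data being precisely the single-variable contexts $\svc{A}$.

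Next I would handle general $n$ by a product argument. The morphism $\nu^n \colon V^n \to S^n$ is the $n$-fold product in the arrow category $\spsh{C}^{\to}$ of the morphism $\nu$ with itself, and representable morphisms into presheaf categories are stable under finite products: a generalised element of $S^n$ over $\yonem\Gamma$ is an $n$-tuple $(A_1, \dots, A_n) \in S^n$, and the pullback of $\nu^n$ along it is the $n$-fold product (over $\yonem\Gamma$) of the pullbacks of the individual copies of $\nu$, namely $\yonem\svc{A_1} \times_{\yonem\Gamma} \cdots \times_{\yonem\Gamma} \yonem\svc{A_n}$. Since $\sct C$ has context extensions (the specified products $\Gamma \times \svc{A}$), each such fibre product over $\yonem\Gamma$ is itself representable — iterating context extension gives a context $\Gamma \times \svc{A_1} \times \cdots \times \svc{A_n}$, and the Yoneda embedding preserves these products — so the pullback of $\nu^n$ is representable, as required.

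The only real subtlety — and the step I would be most careful about — is the interaction between the fibre products over a representable and the chosen context-extension products in $\sct C$: strictly, $\yonem$ preserves products that exist, but one must check that the relevant pullback in $\spsh{C}$ along $\yonem\Gamma \to S$ genuinely computes the $n$-fold context extension rather than merely something equivalent to it, and that the identification is natural in $\Gamma$. This is routine given that $S$ is a constant presheaf (so the pullback along a map to $S$ just selects a coproduct summand pointwise) and that representables are closed under the finite products guaranteed by the cartesian context structure; no equations on types are needed since $V$ and its powers are built purely from the context category. A one-line alternative is to invoke the already-noted fact that $X^{V_A} \equiv X(-\times \svc{A})$, which exhibits exponentiation by $V_A$, hence the relevant pullbacks, as context extension, and then iterate.
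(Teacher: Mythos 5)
Your overall strategy is the right one, and it is essentially the argument the paper intends (the proposition is stated without proof, but the surrounding remark that $X^{V_A} \cong X(- \times \svc{A})$ points to exactly this computation): since $S^n$ is a constant presheaf and $\yonem\Gamma$ is connected, a map $\yonem\Gamma \to S^n$ is a tuple $(A_1, \ldots, A_n) \in S^n$; the pullback of $\nu^n$ along it is $\yonem\Gamma \times V_{A_1} \times \cdots \times V_{A_n}$; and iterated context extension, together with the fact that $\yonem$ preserves the specified products, identifies this with $\yonem\big((\cdots(\Gamma \times \svc{A_1})\cdots) \times \svc{A_n}\big)$.

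There is, however, one concrete misstatement in your base case: the pullback of $\nu$ along $A : \yonem\Gamma \to S$ is not $\yonem\svc{A}$ but $\yonem\Gamma \times \yonem\svc{A} \cong \yonem(\Gamma \times \svc{A})$. What you describe---``picking out the single summand''---computes the fibre $V_A$ of $\nu$ over the global point $A : 1 \to S$, i.e.\ the pullback along $1 \xto{A} S$, not along $\yonem\Gamma \to S$; the two differ by the factor $\yonem\Gamma$. This matters because it is exactly here that the hypothesis on $\sct C$ enters: representability of $\nu$ already for $n = 1$ requires the specified context extensions $\Gamma \times \svc{A}$, whereas your base case as written appears to need only that $V$ is a coproduct of representables, which would not suffice over a category of contexts lacking these products. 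Your $n$-ary step implicitly corrects the slip, since you form fibre products over $\yonem\Gamma$ and land on the right representing context; but as written the intermediate expression $\yonem\svc{A_1} \times_{\yonem\Gamma} \cdots \times_{\yonem\Gamma} \yonem\svc{A_n}$ is not well-formed (there are no given structure maps $\yonem\svc{A_i} \to \yonem\Gamma$) and should read $\yonem(\Gamma \times \svc{A_1}) \times_{\yonem\Gamma} \cdots \times_{\yonem\Gamma} \yonem(\Gamma \times \svc{A_n})$. With that correction the proof is complete.
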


Any presheaf of terms may be restricted to just those with a specified type, by taking pullbacks, as in the following example.

\begin{example}
Given a typed term structure $\tau : T \to S$ and a sort $A \in S$, we denote by $T_A$ the presheaf consisting of terms in $T$ whose type is $A$, given by the fibre:
\[\begin{tikzcd}
	{T_A} & {T} \\
	{1} & {S}
	\arrow["{\tau}", from=1-2, to=2-2] \arrow["{\iota_A}", from=1-1, to=1-2] \arrow[from=1-1, to=2-1] \arrow["{A}"', from=2-1, to=2-2] \arrow["\lrcorner"{very near start, rotate=0}, from=1-1, to=2-2, phantom]
\end{tikzcd}\]
It induces a typed term structure
given by the composite $T_A\to 1\xto{A} S$.
\end{example}

\subsection{Algebraic models of the \stlc{}}
\label{sec:algebraic-models-of-the-stlc}

Terms have two additional forms of structure that is not found in simple types: multisorting and binding. We walk through the illustrative algebraic term structure of the \stlc{} to give intuition before providing the general construction in \Cref{sec:algebraic-term-structure}. First, we will identify the structure we expect our models to have, before seeing how this structure arises from our models being algebras for a polynomial functor in \Cref{sec:stlc-polynomials}.

As with the algebraic structure for types, the algebraic structure for terms is presented by natural deduction rules (typically introduction or elimination rules), each rule corresponding to an operator on terms.

\paragraph{Products}

The introduction rule for $\oper{Prod}$ is given by the following.
\begin{equation}
    \label{rule:prod-intro}
    \begin{prooftree}
    \Hypo{\tr{a}{A}}
    \Hypo{\tr{b}{B}}
    \Infer2[$\oper{Prod}$-\textsc{intro}]{\tr{\oper{pair}(a, b)}{\oper{Prod}(A, B)}}
    \end{prooftree}
\end{equation}
Conceptually, the introduction rule allows one to take two terms of any two types $A$ and $B$ and form a new term, their pair, such that the type of the new term is the product $\denop{Prod}(A, B)$, given by the algebraic structure of the types. A typed term structure $\tau : T \to S$ therefore models \ref{poly:prod-intro} when equipped with a morphism $\denop{pair}$ such that the following diagram commutes.
%
\[\begin{tikzcd}
	{T \times T} & {T} \\
	{S \times S} & {S}
	\arrow["{\tau}", from=1-2, to=2-2]
	\arrow["{\tau \times \tau}"', from=1-1, to=2-1]
	\arrow["{\denote{\oper{pair}}}", from=1-1, to=1-2, dashed]
	\arrow["{\denote{\oper{Prod}}}"', from=2-1, to=2-2]
\end{tikzcd}\]
The elimination rules for products are given by the following.
\vspace{-1.6ex}
\begin{equation}
    \label{rule:prod-elim1}
    \begin{prooftree}
    \Hypo{\tr{p}{\oper{Prod}(A, B)}}
    \Infer1[$\oper{Prod}$-\textsc{elim}$_1$]{\tr{\oper{proj}_1(p)}{A}}
    \end{prooftree}
\end{equation}
\begin{equation}
    \label{rule:prod-elim2}
    \begin{prooftree}
    \Hypo{\tr{p}{\oper{Prod}(A, B)}}
    \Infer1[$\oper{Prod}$-\textsc{elim}$_2$]{\tr{\oper{proj}_2(p)}{B}}
    \end{prooftree}
\end{equation}
A typed term structure $\tau : T \to S$ models the first projection when equipped with a morphism $\denop{proj_1}$ such that the following left-hand square commutes, where $T_{\denop{Prod}}$ is given by the following right-hand square.
%
\[\begin{tikzcd}
	{T_{\denote{\oper{Prod}}}} & {T} \\
	{S \times S} & {S}
	\arrow["{\tau}", from=1-2, to=2-2]
	\arrow[from=1-1, to=2-1]
	\arrow["{\denote{\oper{proj_1}}}", from=1-1, to=1-2, dashed]
	\arrow["{\pi_1}"', from=2-1, to=2-2]
\end{tikzcd}
\qquad
\begin{tikzcd}
	{T_{\denote{\oper{Prod}}}} & {T} \\
	{S \times S} & {S}
	\arrow["{\tau}", from=1-2, to=2-2] \arrow["{\iota}", from=1-1, to=1-2] \arrow[from=1-1, to=2-1] \arrow["{\denote{\oper{Prod}}}"', from=2-1, to=2-2] \arrow["\lrcorner"{very near start, rotate=0}, from=1-1, to=2-2, phantom]
\end{tikzcd}
\]
This condition is analogous to the one for the introduction rule, the primary difference being that it is only possible to project from terms that have type $\oper{Prod}(A, B)$ for some types $A$ and $B$. The situation for \ref{poly:prod-elim2} is analogous.

\paragraph{Units}

Compared to that for products, the algebraic structure for units is almost trivial. The introduction rule for $\oper{Unit}$ is given by the following.
\begin{equation}
    \label{rule:unit-intro}
    \begin{prooftree}
    \Infer0[$\oper{Unit}$-\textsc{intro}]{\tr{\oper{u}}{\oper{Unit}}}
    \end{prooftree}
\end{equation}
A typed term structure $\tau : T \to S$ for the $\oper{Unit}$ type should therefore single out a
term $\denop{u}$ with type $\tau(\denop{u}) = \denop{\oper{Unit}}$ (in any context). That is, we expect the following diagram to commute.
\[\begin{tikzcd}
	{1} & {T} \\
	& {S}
	\arrow["{\tau}", from=1-2, to=2-2]
	\arrow["{\denop{Unit}}"', from=1-1, to=2-2]
	\arrow["{\denop{u}}", from=1-1, to=1-2, dashed]
\end{tikzcd}\]

\paragraph{$\lambda$-abstraction}

Having considered sorting structure, we now consider variable binding. The introduction rule for $\oper{Fun}$ is given by the following.
\begin{equation}
    \label{rule:fun-intro}
    \begin{prooftree}
    \Hypo{\trec{\twt{a}{A}}{b}{B}}
    \Infer1[$\oper{Fun}$-\textsc{intro}]{\tr{\oper{abs}(\bind{\twt{a}{A}}{b})}{\oper{Fun}(A, B)}}
    \end{prooftree}
\end{equation}
The $\oper{abs}$ operator allows one to take a term in an extended context and form a term in the original context. A typed term structure $\tau : T \to S$ therefore models \ref{poly:fun-intro} when equipped, for every context $\Gamma$ and types $A, B \in S$, with a 
mapping
$\denop{abs}_{\Gamma}^{A, B}$, natural in $\Gamma$, such that the following diagram on the left commutes.
%
\[\begin{tikzcd}
	{T_B(\Gamma \times \svc{A})} & {T(\Gamma)} \\
	{1} & {S}
	\arrow["{\tau_\Gamma}", from=1-2, to=2-2]
	\arrow["{\denop{abs}_{\Gamma}^{A, B}}", from=1-1, to=1-2, dashed]
	\arrow[from=1-1, to=2-1]
	\arrow["{\denop{Fun}(A,B)}"', from=2-1, to=2-2]
\end{tikzcd}
\qquad
\begin{tikzcd}
	{{T_B}^{V_A}} & {T} \\
	{1} & {S}
	\arrow["{\tau}", from=1-2, to=2-2]
	\arrow["{\denop{abs}^{A, B}}", from=1-1, to=1-2, dashed]
	\arrow[from=1-1, to=2-1]
	\arrow["{\denop{Fun}(A,B)}"', from=2-1, to=2-2]
\end{tikzcd}\]
Through the relationship between context extension and exponentiation by 
representables (\Cref{eg:presheaf-of-variables}),
this is equivalent to the above diagram on the right, for all $A, B \in S$.
Finally, quantifying over the types, this is further equivalent to the following formulation.
\vspace{-1.6ex}
%
\[\begin{tikzcd}
	{\coprod_{A, B \in S} {T_B}^{V_A}} & {T} \\
	{S \times S} & {S}
	\arrow["{\tau}", from=1-2, to=2-2]
	\arrow["{\denop{abs}}", from=1-1, to=1-2, dashed]
	\arrow["{\pi}"', from=1-1, to=2-1]
	\arrow["{\denop{Fun}}"', from=2-1, to=2-2]
\end{tikzcd}\]
The elimination rule is simpler.
\begin{equation}
    \label{rule:fun-elim}
    \begin{prooftree}
    \Hypo{\tr{f}{\oper{Fun}(A, B)}}
    \Hypo{\tr{a}{A}}
    \Infer2[$\oper{Fun}$-\textsc{elim}]{\tr{\oper{app}(f, a)}{B}}
    \end{prooftree}
\end{equation}
A typed term structure $\tau : T \to S$ models function application when equipped with a morphism $\denop{app}$ such that the following square commutes.
%
\[\begin{tikzcd}
	{\coprod_{A, B \in S} T_{\denop{Fun}(A, B)} \times T_A} & {T} \\
	{S \times S} & {S}
	\arrow["{\pi_2}"', from=2-1, to=2-2]
	\arrow["{\tau}", from=1-2, to=2-2]
	\arrow["{\denop{app}}", from=1-1, to=1-2, dashed]
	\arrow["{\pi}"', from=1-1, to=2-1]
\end{tikzcd}\]

\subsection{Polynomials for the \stlc{}}
\label{sec:stlc-polynomials}

We now show how the above structure for models of the \stlc{} is actually algebraic structure for a polynomial functor. While, so far, we have only dealt with polynomials in $\Set$, we recall that the concept makes sense for any presheaf category $\spsh C$.

For every morphism $f : A \to B$ in $\spsh C$, there is an adjoint triple $\dsum{f} \adj \pb{f} \adj \dprod{f}$ where $\dsum{f}$ is postcomposition by $f$ and $\pb{f} : \spsh C/B \to \spsh C/A$ is pullback along $f$. Every polynomial $I \from A \to B \to J$ induces a polynomial functor $\dsum{t} \dprod{f} \pb{s} : \spsh C/I \to \spsh C/J$.

An algebra for a functor $F : \spsh C/S \to \spsh C/S$ is a typed term structure $\tau : T \to S$ with a morphism $\varphi : \text{dom}(F(\tau)) \to T$ such that the following diagram commutes.
%
\[\begin{tikzcd}
	{\text{dom}(F(\tau))} & {T} \\
	& {S}
	\arrow["{\tau}", from=1-2, to=2-2]
	\arrow["{F(\tau)}"', from=1-1, to=2-2]
	\arrow["{\varphi}", from=1-1, to=1-2, dashed]
\end{tikzcd}\]
We will write $F(T)$ to mean $\text{dom}(F(\tau))$ when unambiguous.

In particular, algebras for a polynomial functor, 
$\varphi: \dsum{t} \dprod{f} \pb{s}(\tau : T \to S) \to (\tau : T \to S)$, 
are illustrated by the following diagram.
%
\[\begin{tikzcd}
	{T} & {\pb{s}(T)} & {\dprod{f} \pb{s}(T)} & {T} \\
	{S} & {A} & {B} & {S}
	\arrow["{f}"', from=2-2, to=2-3]
	\arrow["{s}", from=2-2, to=2-1]
	\arrow["{t}"', from=2-3, to=2-4]
	\arrow["{\tau}"', from=1-1, to=2-1]
	\arrow["{\tau}", from=1-4, to=2-4]
	\arrow["{\dprod{f} \pb{s}(\tau)}" description, from=1-3, to=2-3]
	\arrow["{\varphi}", from=1-3, to=1-4, dashed]
	\arrow[from=1-2, to=1-1]
	\arrow["{\pb{s}(\tau)}" description, from=1-2, to=2-2]
	\arrow["\lrcorner"{very near start, rotate=-90}, from=1-2, to=2-1, phantom]
\end{tikzcd}\]
We will sometimes depict polynomials geometrically as in the following.
\vspace{-1.6ex}
%
\[\begin{tikzcd}[cramped, row sep=tiny, column sep=small]
	& {A} & {B} \\
	{I} &&& {J}
	\arrow["{s}"', from=1-2, to=2-1] \arrow["{t}", from=1-3, to=2-4] \arrow["{f}", from=1-2, to=1-3]
\end{tikzcd}\]
We may then unambiguously omit a component morphism, which is taken to be the identity. The composition of two polynomials is also a polynomial \cite[Proposition~1.12]{gambino2013polynomial}, depicted graphically as in the following.
\[\begin{tikzcd}[cramped, row sep=tiny, column sep=small]
	& {A} & {B} && {C} & {D} \\
	{I} &&& {J} &&& {K}
	\arrow[
	  from=1-2, to=2-1] 
	\arrow[
	  from=1-3, to=2-4] 
	\arrow[
	  from=1-2, to=1-3] 
	\arrow[
	  from=1-5, to=2-4] 
	\arrow[
	  from=1-5, to=1-6] 
	\arrow[
	  from=1-6, to=2-7]
\end{tikzcd}\]

\paragraph{Products}

The condition for \ruleref{prod-intro} exactly states that $\tau : T \to S$ is an algebra for the polynomial functor induced by the following polynomial in $\spsh C$.
\begin{equation}
    \label{poly:prod-intro}
    \tag{\hyperref[rule:prod-intro]{$\oper{Prod}$-\textsc{intro}}}
    S \xfrom{[\pi_1, \pi_2]} S^2 + S^2 \xto{\codiag_2} S^2 \xto{\denote{\oper{Prod}}} S
\end{equation}
The structure of this polynomial may seem opaque at first; we will attempt to provide some intuition. The polynomial describes a (many-in, one-out) transformation between terms, respecting the type structure. The middle component $\codiag_2 : S^2 + S^2 \to S^2$ represents the type metavariables $A$ and $B$: each summand in the domain represents the metavariables in a premiss, while the codomain represents the metavariables in the conclusion. While some metavariables may not appear in every premiss, each premiss is implicitly parameterised by each type metavariable; the codiagonal ensures that the metavariables available to each premiss (and the conclusion) are the same (\ie{} unified).

The leftmost component $S \from S^2 + S^2 : [\pi_1, \pi_2]$ describes the types of each premiss, given the metavariables. In this case, the types are simply projections: the left-hand side to $A$ and the right-hand side to $B$. The rightmost component $\denop{Prod} : S^2 \to S$ describes the type of the conclusion, given the metavariables: in this case, constructing the product of $A$ and $B$.

An algebra for the functor induced by this polynomial is calculated explicitly below, to demonstrate that it aligns with the structure we deduced earlier.
%
\[\begin{tikzcd}
	{T} & {T \times S + S \times T} & {T^2} & {T} \\
	{S} & {S^2 + S^2} & {S^2} & {S}
	\arrow["{\denop{Prod}}"', from=2-3, to=2-4] \arrow["{[\pi_1, \pi_2]}", from=2-2, to=2-1] \arrow["{\codiag_2}"', from=2-2, to=2-3] \arrow["{\tau}"', from=1-1, to=2-1] \arrow["{[\pi_1, \pi_2]}"', from=1-2, to=1-1] \arrow["{\tau \times \id{} + \id{} \times \tau}" description, from=1-2, to=2-2] \arrow["{\tau^2 = \dprod{\codiag_2}(\tau \times \id{} + \id{} \times \tau)}" description, from=1-3, to=2-3] \arrow["{\tau}", from=1-4, to=2-4] \arrow["{\denop{pair}}", from=1-3, to=1-4, dashed] \arrow["\lrcorner"{very near start, rotate=-90}, from=1-2, to=2-1, phantom]
\end{tikzcd}\]
The polynomials for the projections are similarly described. For $\tau : T \to S$ to be a model of the product eliminators \ruleref{prod-elim1} and \ruleref{prod-elim2}, we require it to be an algebra for the following polynomials.
\begin{equation}
    \label{poly:prod-elim1}
    \tag{\hyperref[rule:prod-elim1]{$\oper{Prod}$-\textsc{elim}$_1$}}
    S \xfrom{\denote{\oper{Prod}}} S^2 \xto{\codiag_1} S^2 \xto{\pi_1} S
\end{equation}
\begin{equation}
    \label{poly:prod-elim2}
    \tag{\hyperref[rule:prod-elim2]{$\oper{Prod}$-\textsc{elim}$_2$}}
    S \xfrom{\denote{\oper{Prod}}} S^2 \xto{\codiag_1} S^2 \xto{\pi_2} S
\end{equation}
Given some examination, the structure of the polynomials is analogous to that of the introduction rule, with the first component selecting the type of the premiss, the codiagonal (in this case trivially) unifying the premisses, and the final component selecting the type of the conclusion.

\paragraph{Units}

For \ruleref{unit-intro}, the polynomial inducing the structure is similarly defined. For $\tau : T \to S$ to be a model of $\oper{u}$, we require it to be an algebra for the following polynomial.
\begin{equation}
    \label{poly:unit-intro}
    \tag{\hyperref[rule:unit-intro]{$\oper{Unit}$-\textsc{intro}}}
    S \xfrom{!} 0 \xto{\codiag_0} 1 \xto{\denop{Unit}} S
\end{equation}
One may see that, as the introduction rule for $\oper{Unit}$ has no premisses and no type metavariables, this (trivially) fits the same pattern as with $\oper{Prod}$.

\paragraph{$\lambda$-abstraction}

To describe binding structure, we need more sophisticated polynomials. For $\tau : T \to S$ to be a model of \ruleref{fun-intro}, we require it to be an algebra for the following polynomial.
\begin{equation}
    \label{poly:fun-intro}
    \tag{\hyperref[rule:fun-intro]{$\oper{Fun}$-\textsc{intro}}}
    S \xfrom{\pi_2} V \times S \xto{\nu \times \id{}} S^2 \xto{\denote{\oper{Fun}}} S
\end{equation}
Here, the first and last components are familiar from the previous examples. 
The form of the middle component is new: metavariables involved in context extension, and therefore in variable binding, must be fibred over the presheaf of variables $V$. The typed term structure $\nu: V \to S$ of~\Cref{eg:presheaf-of-variables} forgets
the information associated to a variable apart from its type.

For $\tau : T \to S$ to be a model of \ruleref{fun-elim}, we require it to be an algebra for the following polynomial.
\begin{equation}
    \label{poly:fun-elim}
    \tag{\hyperref[rule:fun-elim]{$\oper{Fun}$-\textsc{elim}}}
    S \xfrom{[\denote{\oper{Fun}}, \pi_1]} S^2 + S^2 \xto{\codiag_2} S^2 \xto{\pi_2} S
\end{equation}

Polynomials are closed under taking coproducts: for a typed term structure to be a model of the entire structure of the \stlc{}, therefore, we require it to be an algebra for the polynomial endofunctor induced by the coproduct of all the aforementioned polynomials.

\subsection{Algebraic term structure}
\label{sec:algebraic-term-structure}

We now give syntax for a general natural deduction rule for a term operator and the construction of the polynomial it induces. As with type operators, we have a notion of arity corresponding to term operators.

\begin{notation}
\label{not:second-order-arity}
We denote by \[(A^1_1, \ldots, A^1_{k_1})A_1, \ldots, (A^n_1, \ldots, A^n_{k_n})A_n \to (B_1, \ldots, B_k)B\] the $S$-sorted second-order arity (\Cref{def:arity})
\[\scriptstyle ((((A^1_1, \ldots, A^1_{k_1}), A_1), \ldots, ((A^n_1, \ldots, A^n_{k_n}), A_n)), ((B_1, \ldots, B_k), B)) \in \ar_2(S)\]
Second-order arities correspond to the operators of multisorted binding algebra \cite{fiore2010second}: such an arity represents an operator taking $n$ arguments, the $i$\ssth{} of which binds $k_i$ variables, which is parameterised by $k$ variables. We identify nullary arities with constants.
\end{notation}

Given an equational type signature $\sigty$ and $m \in \Nat$ type metavariables, we can represent term operators by \text{$\msigty(\underline{m})$-sorted} second-order arities. An $n$-ary term operator
\begin{equation}
    \label{eq:second-order-oper}
    \footnotesize \oper{o} : (A^1_1, \ldots, A^1_{k_1})A_1, \ldots, (A^n_1, \ldots, A^n_{k_n})A_n \to (B_1, \ldots, B_k)B
\end{equation}
corresponds to a rule as in \Cref{fig:term-operator}, universally quantified over all contexts $\Gamma$.

\begin{figure*}
\begin{mdframed}
\[\begin{prooftree}
\Hypo{\trec{\twt{x^1_1}{A^1_1}, \ldots, \twt{x^1_{k_1}}{A^1_{k_1}}}{t_1}{A_1}}
\Hypo{\cdots}
\Hypo{\trec{\twt{x^n_1}{A^n_1}, \ldots, \twt{x^n_{k_n}}{A^n_{k_n}}}{t_n}{A_n}}
\Infer3{\trec{\twt{y_1}{B_1}, \ldots, \twt{y_k}{B_k}}{\oper{o}[\twt{y_1}{B_1}, \ldots, \twt{y_k}{B_k}]\big((\twt{x^1_1}{A^1_1}, \ldots, \twt{x^1_{k_1}}{A^1_{k_1}}) t_1, \ldots, (\twt{x^n_1}{A^n_1}, \ldots, \twt{x^n_{k_n}}{A^n_{k_n}}) t_n\big)}{B}}
\end{prooftree}\]
\caption{Natural deduction rule for a term operator}
\label{fig:term-operator}
\end{mdframed}
\end{figure*}

\begin{definition}
\label{def:parameterised-term-operators}
We say that a term operator is \emph{parameterised} when $k \neq 0$.
\end{definition}

A term operator for an equational type signature $\sigty$, as in \eqref{eq:second-order-oper}, induces a polynomial in $\spsh C$ for any cartesian $\sigty$-typed context structure, given in \Cref{fig:term-polynomial}.

\begin{figure*}
\begin{mdframed}
    \[\!\!\!\begin{tikzcd}[column sep=scriptsize, row sep=normal]
    	& & & {\coprod_{1 \leq i \leq n} S^m} & {S^m} && {V^k \times S^m} 
    	\\
    	&
    	{\coprod_{1 \leq i \leq n} V^{k_i} \times S^m} 
    	\arrow{r}[swap, yshift=-1ex]{\coprod_{1 \leq i \leq n} \nu^{k_i} \times \id{}} 
    	& 
    	{\coprod_{1 \leq i \leq n} S^{k_i} \times S^m} 
    	&
    	&& 
    	{S^k \times S^m} 
    	&& 
    	{S} 
    	\\
    	{S} 
    	\arrow["{[\denote{A_i} \comp \pi_2]_{1 \leq i \leq n}}"', from=2-2, to=3-1]
    	\arrow["{\coprod_{1 \leq i \leq n} \tuple{\tuple{\denote{A^i_j}}_{1 \leq j \leq k_i},           \id{}}}"', from=1-4, to=2-3]
    	\arrow["{\nabla_n}", from=1-4, to=1-5]
    	\arrow["{\tuple{\tuple{\denote{B_j}}_{1 \leq j \leq k}, \id{}}}"', from=1-5, to=2-6]
    	\arrow["{\nu^k \times \id{}}", from=1-7, to=2-6]
    	\arrow["{\denote{B}\comp\pi_2}", from=1-7, to=2-8]
    \end{tikzcd}\]
    \caption{Polynomial induced by a term operator}
    \label{fig:term-polynomial}
\end{mdframed}
\end{figure*}

\begin{figure*}
\begin{mdframed}
    \[
    \begin{tikzcd}
    	{\prod_{1 \leq i \leq n} T_{\denote{A_i}(\pbf C)}(\Gamma \times \svc{\denote{A^i_j}(\pbf C)}_{1 \leq j \leq k_i})} & {T(\Gamma \times \svc{\denote{B_j}(\pbf C)}_{1 \leq j \leq k})} \\
    	{1} & {S}
    	\arrow["{\tau_{\Gamma  \times \svc{\denote{B_j}(\pbf C)}_{1 \leq j \leq k}
    	}}", from=1-2, to=2-2]
    	\arrow["{\denote{B}(\pbf C)}"', from=2-1, to=2-2]
    	\arrow[from=1-1, to=2-1]
    	\arrow["\denote{\oper{o}}^\sharp_\Gamma", from=1-1, to=1-2]
    \end{tikzcd}
    \qquad\qquad
    (\pbf C \in S^m)
    \]
    \[
    \text{Natural in the context $\Gamma$, where 
    $\svc{D_1, \ldots, D_\ell} 
    \defeq 
    (\cdots (\epsilon \times \svc{D_1}) \times \cdots) \times \svc{D_\ell}.$}
    \]
    \caption{Algebra structure induced by a term operator}
    \label{fig:term-algebra}
\end{mdframed}
\end{figure*}

\begin{definition}
A \emph{term operator signature}, denoted $\optm$, for an equational type signature $\sigty$ is given by a list of pairs of natural numbers $m \in \Nat$ and $\msigty(\underline{m})$-sorted second-order arities.
\end{definition}

\begin{example}[Term operators for the \stlc{}]
\setlength{\abovedisplayskip}{-1ex}
\begin{align*}
    &\rhd \oper{u} : \oper{Unit} \\
    A, B : * &\rhd \oper{abs} : \bind{A}{B} \to \oper{Fun}(A, B) \\
    A, B : * &\rhd \oper{app} : \oper{Fun}(A, B), A \to B \\
    A, B : * &\rhd \oper{pair} : A, B \to \oper{Prod}(A, B) \\
    A, B : * &\rhd \oper{proj}_1 : \oper{Prod}(A, B) \to A \\
    A, B : * &\rhd \oper{proj}_2 : \oper{Prod}(A, B) \to B
\end{align*}
\end{example}

A term operator signature induces a polynomial (\resp{} polynomial functor), given by taking the coproduct of the polynomials (\resp{} polynomial functors) induced by its elements.

\begin{notation}
We will denote by $\optm$ both a term operator signature and the polynomial functor it induces.
\end{notation}

\begin{remark}
\label{remark:explicit-dependent-products}
To gain intuition for the polynomial algebraic structure, it is instructive to evaluate the polynomials oneself, starting with an arbitrary $\tau : T \to S$ and taking pullbacks, dependent products and postcomposing. Of these operations, pullbacks and postcomposing are straightforward. 
We give the two relevant calculations for the dependent products explicitly.
\[\begin{array}{l}
\dprod{\codiag_n}(P \to \coprod_{1 \leq i \leq n} S^m)
\equiv 
  \big(\coprod_{\pbf A \in S^m} \prod_{1 \leq i \leq n} P_{\tuple{i, \pbf A}} \big) \to S^m
\\[10pt]
\dprod{\nu^k \times \id{}}(V^k \times P \xto{\id{}\times p} V^k \times S^m) 
\\[5pt]
\quad \equiv 
  \big(
    \coprod_{\pbf A \in S^k, \pbf B \in S^m} 
      {P_{\pbf B}}
        ^{ \prod_{1 \leq i \leq k} 
             V_{A_i}
          } 
  \big) 
  \to S^k \times S^m 
\end{array}\]
\end{remark}

\begin{proposition}
In elementary terms, $\optm$-algebras for the polynomial functor as in~\Cref{fig:term-polynomial} are equivalently given by typed term structures $\tau : T \to S$ with a natural transformation $\denop{o}^\sharp$ such that the diagram in \Cref{fig:term-algebra} commutes.
\end{proposition}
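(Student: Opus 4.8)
The plan is to unwind the polynomial functor of \Cref{fig:term-polynomial} step by step, applying the three operations $\pb{s}$, $\dprod{f}$ and $\dsum{t}$ in turn to a given typed term structure $\tau : T \to S$, and to check at each stage that the resulting object and map matches the elementary description on the left-hand side of the span pictured implicitly in \Cref{fig:term-algebra}. Concretely, I would first compute $\pb{s}$: pulling $\tau$ back along the leftmost leg $[\denote{A_i} \comp \pi_2]_{1 \leq i \leq n} : \coprod_{1 \leq i \leq n} V^{k_i} \times S^m \to S$ produces, in each $i$-summand and over each $\pbf C \in S^m$ and each tuple of variables, the fibre $T_{\denote{A_i}(\pbf C)}$; using the representability of $\nu^{k_i} : V^{k_i} \to S^{k_i}$ (the proposition in \Cref{sec:terms}) and the identification $X^{V_A} \equiv X(- \times \svc A)$, this is precisely the presheaf $\Gamma \mapsto T_{\denote{A_i}(\pbf C)}(\Gamma \times \svc{\denote{A^i_j}(\pbf C)}_{1 \leq j \leq k_i})$. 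Here I would invoke the two explicit dependent-product calculations recorded in \Cref{remark:explicit-dependent-products}: the $\dprod{\nu^{k_i} \times \id{}}$ over each summand turns the fibre presheaf into the exponential $(T_{\denote{A_i}(\pbf C)})^{\prod_j V_{\denote{A^i_j}(\pbf C)}}$, i.e.\ context extension, and the subsequent $\dprod{\nabla_n}$ across the $n$ summands forms, over each $\pbf C$, the product $\prod_{1 \leq i \leq n}$ of these.

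Next I would handle the right half of the span: the map $\nu^k \times \id{} : V^k \times S^m \to S^k \times S^m$, together with $\denote{B} \comp \pi_2 : V^k \times S^m \to S$ for the conclusion; this is where the parameter-binding $(B_1,\dots,B_k)$ shows up. Composing $\dsum{t}$ along $\denote{B} \comp \pi_2$ reindexes the domain of the polynomial functor back into $\spsh C / S$, assigning to the whole bundle (over each $\pbf C$) the type $\denote{B}(\pbf C)$, and — again via representability of $\nu^k$ and the context-extension identification — records the target as a term in the context $\Gamma \times \svc{\denote{B_j}(\pbf C)}_{1 \leq j \leq k}$. Assembling, the domain of the polynomial functor applied to $\tau$ is exactly $\coprod_{\pbf C \in S^m} \big( \prod_{1 \leq i \leq n} T_{\denote{A_i}(\pbf C)}(- \times \svc{\denote{A^i_j}(\pbf C)}_{1 \leq j \leq k_i}) \big)$ sitting over $S$ via $\pbf C \mapsto \denote{B}(\pbf C)$, so an algebra structure $\varphi$ on $\tau$ is, componentwise over $\pbf C$, precisely a natural transformation $\denote{\oper{o}}^\sharp_\Gamma$ of the type displayed in \Cref{fig:term-algebra}, and the commuting triangle for the polynomial-functor algebra (the $\tau \comp \varphi = F(\tau)$ condition in \Cref{sec:stlc-polynomials}) unwinds to exactly the commuting square there that fixes the type of the output to be $\denote{B}(\pbf C)$. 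Finally, since a term operator signature $\optm$ is a list and both polynomials and the $\dom$-of-algebra construction commute with coproducts, the statement for a single operator yields the statement for the whole signature.

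The bookkeeping obstacle — and the one genuinely delicate point — is the interaction between the coproduct $\coprod_{1 \leq i \leq n}$ over premisses (which is collapsed by $\nabla_n$ into a product inside a single fibre) and the family of base points $\pbf C \in S^m$ that is threaded through every component by the $\times S^m$ factors and the $\pi_2$'s. One must be careful that the $\dprod{\nabla_n}$ step is taken relative to the bundle over $\coprod_{1 \leq i \leq n} S^m$ rather than over $\coprod_{1 \leq i \leq n} S^{k_i} \times S^m$, i.e.\ that the $S^{k_i}$ data has already been absorbed by the preceding $\dprod{\nu^{k_i} \times \id{}}$ into an exponential before the premisses are multiplied together; getting the order of these two dependent products right is what makes the nested "product of context-extended fibres, indexed by the shared metavariable assignment $\pbf C$" come out correctly. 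Once the two calculations of \Cref{remark:explicit-dependent-products} are applied in the right order, everything else — the pullbacks, the postcomposition, naturality in $\Gamma$, and the passage from functor-algebra triangle to the square of \Cref{fig:term-algebra} — is routine diagram-chasing, so I would present those parts tersely and spend the written detail on the dependent-product bookkeeping.
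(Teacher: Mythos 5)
Your proposal is correct and follows essentially the route the paper intends: the paper states this proposition without an explicit proof, relying on the immediately preceding \Cref{remark:explicit-dependent-products}, and your unwinding --- pullback along the leftmost leg, the two dependent products ($\dprod{\nu^{k_i}\times\id{}}$ per summand, then $\dprod{\nabla_n}$ across summands, in that order), and the final reindexing via $\nu^k\times\id{}$ and $\denote{B}\comp\pi_2$ transposed into context extension of the conclusion --- is exactly that calculation. The point you flag as delicate (that the $S^{k_i}$ data is absorbed into exponentials, and specialised by the intermediate pullback along $\coprod_i\tuple{\tuple{\denote{A^i_j}}_j,\id{}}$, before the premisses are multiplied by $\dprod{\nabla_n}$) is indeed the only nontrivial bookkeeping.
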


\begin{proposition}
\label{lemma:sigtm-finitary}
For all term signatures, the endofunctor $\optm$ on $\spsh C/S$ is finitary.
\end{proposition}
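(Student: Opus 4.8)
The plan is to show that $\optm$ is finitary by decomposing it into its constituent operations and verifying each one preserves filtered colimits. Since $\optm$ is a coproduct of polynomial functors, one per term operator in the signature, and coproducts of finitary functors are finitary (coproducts commute with filtered colimits in any presheaf category), it suffices to treat a single term operator. So I would fix a term operator as in \eqref{eq:second-order-oper} and analyse the polynomial functor it induces, displayed in \Cref{fig:term-polynomial}.

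The key structural observation is that a polynomial functor $\dsum{t}\dprod{f}\pb{s} : \spsh C/I \to \spsh C/J$ is finitary provided the middle map $f$ has \emph{finite fibres} in a suitable sense — more precisely, the dependent product $\dprod{f}$ preserves filtered colimits when $f$ is, fibrewise, exponentiation by a finitely presentable object. The functors $\dsum{t}$ (postcomposition) and $\pb{s}$ (pullback) are left adjoints, hence cocontinuous, so they are automatically finitary; the only delicate factor is $\dprod{f}$. So the heart of the argument is to inspect the middle maps occurring in \Cref{fig:term-polynomial} and check that each induces a finitary dependent product. Reading off the explicit formulas in \Cref{remark:explicit-dependent-products}: the map $\codiag_n : \coprod_{1\le i\le n}S^m \to S^m$ contributes a finite product $\prod_{1\le i\le n}(-)$, which is finitary since $n$ is finite and finite products commute with filtered colimits in $\Set$ (hence pointwise in $\spsh C$); and the map $\nu^k\times\id{} : V^k\times S^m \to S^k\times S^m$ contributes exponentiation by $\prod_{1\le i\le k}V_{A_i}$, i.e. by a finite product of representables $V_{A_i} = \yonem\svc{A_i}$. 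Since each $\yonem\svc{A_i}$ is a finitely presentable (indeed representable) object of $\spsh C$, and a finite product of such is finitely presentable, exponentiation by it preserves filtered colimits.

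Concretely, the remaining step is to assemble these observations: the polynomial of \Cref{fig:term-polynomial} is a composite of polynomials, and composites of polynomial functors are polynomial functors \cite[Proposition~1.12]{gambino2013polynomial}; more to the point, a composite of finitary functors is finitary. Factoring the displayed polynomial through its intermediate objects, each leg is either a left adjoint (the $\pb{s}$ and $\dsum{t}$ parts, including the reindexings by $\nu^{k_i}\times\id{}$ in the domain which are pullbacks) or a dependent product of one of the two types analysed above, so the whole composite is finitary. I would also remark that the embedding $\spsh C/S \hookrightarrow \spsh C$ and its structure make "finitary" unambiguous here, filtered colimits in $\spsh C/S$ being computed as in $\spsh C$, i.e. pointwise in $\Set$.

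The main obstacle — really the only non-formal point — is the claim that exponentiation by a finite product of representables is finitary in $\spsh C$. This reduces to: representable presheaves are finitely presentable, and $(-)^P$ for $P$ finitely presentable preserves filtered colimits. For representables this is immediate from Yoneda ($\spsh C(\yonem c \times X, Y)\cong Y(c)^{X(c)}$ is not quite it — rather one uses that $(-)^{\yonem c} = (-)(c\times -)$ up to the context-extension identification noted after \Cref{eg:presheaf-of-variables}, or directly that $\yonem c$ is tiny/atomic), and finite products of finitely presentable objects are finitely presentable. I would state this cleanly as a lemma or cite it, since it is the one place where the combinatorics of the binding structure (the $V^{k_i}$ factors) actually interacts with the colimit argument; everything else is the general fact that polynomial functors built from left adjoints and finite-fibre dependent products are finitary.
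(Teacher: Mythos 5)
The paper states this proposition without proof, but your argument is precisely the intended one: $\optm$ is a coproduct of composites of polynomial functors $\dsum{t}\dprod{f}\pb{s}$, the $\dsum{t}$ and $\pb{s}$ parts are left adjoints and hence cocontinuous, and the only dependent products occurring are along finite codiagonals (yielding finite products, which commute with filtered colimits) and along finite powers of $\nu$ (yielding exponentiation by finite products of the representables $V_A$, which --- since the category of contexts has the relevant products --- is precomposition with context extension and therefore cocontinuous). The one slip is the aside describing the maps $\nu^{k_i} \times \id{}$ as pullbacks: in \Cref{fig:term-polynomial} they occur as middle maps and so contribute dependent products, but since that is exactly the exponentiation-by-representables case you analyse correctly, nothing essential is missing.
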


Thus, $\optm$ induces a monad~\cite{fiore2009construction} describing the term structure, closed under the operators of the signature.

\begin{notation}
Given a term operator signature $\optm$, we denote by $\moptm$ the free $\optm$-algebra monad on $\spsh C/S$.
\end{notation}

The Eilenberg--Moore category of the monad $\moptm$ is isomorphic to the category of $\optm$-algebras.

\section{Models of simply typed syntax}
\label{sec:models-syntax}

We now give the definition of simply typed syntax, along with its models. Note that this is not yet a full notion of simple type theory, as we lack substitution and equations.

\begin{definition}
A \emph{simply typed syntax} consists of:
\begin{itemize}
    \item a type operator signature $\opty$;
    \item a term operator signature $\optm$ for $\opty$.
\end{itemize}
\end{definition}

\begin{definition}
A \emph{model} for a simply typed syntax consists of
\begin{itemize}
    \item an $\opty$-algebra $\denop{ty} : \opty(S) \to S$;
    \item a cartesian $\opty$-typed context structure $\sct C$ for $S$;
    \item an $\optm$-algebra 
    $\denop{tm}\!: \optm {(\tau\!:\!T\!\to\!S)} \to {(\tau\!:\!T\!\to\!S)}$.
\end{itemize}
\end{definition}

\begin{proposition}
\label{prop:simply-typed-cwfs}
$S$-sorted simply-typed categories with families~\cite{castellan2019categories} are equivalent to models of simply typed syntax for an empty type and term signature, such that the carriers of the $\opty$- and $\optm$-algebras are $S$ and $\nu : V \to S$ respectively.
\end{proposition}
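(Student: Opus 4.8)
The plan is to unfold both notions into elementary data and exhibit a structure-preserving translation in each direction. First I would specialise the definition of a model of simply typed syntax to the stated situation. Since $\opty$ is the empty type operator signature, $\opty(S) \cong 0$ and $\msigty$ is the identity monad on $\Set$, so an $\opty$-algebra is merely a set $S$, and a cartesian $\opty$-typed context structure for it is merely a small category $\sct C$ with a chosen terminal object $\epsilon$, a functor $\svc{-} : \disc S \to \sct C$, and a chosen product $\Gamma \times \svc A$ for every $\Gamma \in \sct C$ and $A \in S$. Since $\optm$ is empty, $\optm$ is constant at the initial object of $\spsh C/S$, so an $\optm$-algebra on a typed term structure carries no data; and the typed term structure itself is fixed by hypothesis to be $\nu : V \to S$, which is already determined by $(S,\sct C)$. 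Hence a model of this shape is precisely a pair $(S,\sct C)$ of the above form.

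Next I would recall the definition of an $S$-sorted simply-typed category with families from \cite{castellan2019categories}: a small category $\sct C$ with a terminal object, with the set $S$ as its (constant) presheaf of types, a presheaf $\mathrm{Tm}(-,A)$ on $\sct C$ for each $A \in S$ together with substitution, and, for each $\Gamma$ and $A$, a comprehension object $\Gamma.A$ with projection $\mathrm{p}_{\Gamma,A} : \Gamma.A \to \Gamma$, last variable $\mathrm{q}_{\Gamma,A} \in \mathrm{Tm}(\Gamma.A,A)$, and the universal property of pairing $\langle-,-\rangle$. The crucial observation is that instantiating this universal property at $\Gamma = \epsilon$ yields a natural isomorphism $\mathrm{Tm}(\Delta,A) \cong \sct C(\Delta,\epsilon.A)$, so in any scwf the term structure is canonically the presheaf of variables, $\coprod_{A \in S}\mathrm{Tm}(-,A) \cong \coprod_{A \in S}\yonem(\epsilon.A) = V$, with the type assignment corresponding to $\nu$; this is why the ``such that'' clause in the statement costs nothing, and it lets us identify $\svc A$ with $\epsilon.A$.

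I would then define the two translations. From a model $(S,\sct C)$: take the same category and type set, put $\mathrm{Tm}(-,A) \defeq V_A$ with substitution by precomposition, and put $\Gamma.A \defeq \Gamma \times \svc A$ with $\mathrm{p}$ the first projection and $\mathrm{q} \in \sct C(\Gamma\times\svc A,\svc A) = V_A(\Gamma\times\svc A)$ the second projection; the universal property of the chosen product is literally the comprehension universal property. Conversely, from an scwf: keep $\sct C$ and $S$, set $\svc A \defeq \epsilon.A$, and take $\Gamma\times\svc A \defeq \Gamma.A$ with projections $\mathrm{p}$ and the morphism $\Gamma.A \to \svc A$ corresponding to $\mathrm{q}$; the comprehension universal property then says exactly that this is a product. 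These are mutually inverse up to the canonical isomorphism $\mathrm{Tm}\cong V$. Extending to morphisms, a morphism of models of this degenerate shape amounts to a homomorphism of the underlying cartesian context structures (\Cref{def:context-structure-hom})---a functor $H : \sct C \to \sct C'$ and a function $h : S \to S'$ strictly preserving $\epsilon$, $\svc{-}$ and the chosen context extension---which under the translation is precisely a (strict) morphism of $S$-sorted scwfs; checking functoriality and that the composites are (isomorphic to) identities yields the equivalence, indeed an isomorphism of categories once the chosen structure is fixed on the nose on both sides. (As a sanity check, the initial model then corresponds to the free scwf on $S$, whose contexts form $\freecart{S}$.)

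The step I expect to be the main obstacle is the bookkeeping around chosen structure and morphisms: cartesian context structures come with chosen products while scwfs come with chosen comprehension, and one must line up the two notions of structure-preserving map---and hence pin down the level of strictness implicit in the word ``equivalent''---carefully. By contrast, the object-level correspondence is an immediate consequence of the comprehension universal property specialised to the empty context.
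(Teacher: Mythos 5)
The paper states this proposition without proof, so there is no in-paper argument to compare against; your unfolding is correct and is surely the intended one. The key observation---that comprehension at the empty context makes $\mathrm{Tm}(-,A)$ representable by $\epsilon.A$, so the term presheaf of any scwf is canonically the presheaf of variables and $\Gamma.A$ is canonically the product $\Gamma\times\svc{A}$---is exactly what renders the ``such that'' clause automatic, and your closing caveat is the right one to flag: homomorphisms of models of the \emph{empty} simply typed syntax impose no preservation of the generic term $\mathrm{q}$ (that condition only appears for models of simple type theories, via preservation of variable structure), so pinning down which notion of scwf morphism and which level of strictness realises the stated equivalence is genuinely the only delicate point.
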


To discuss the relationships between different models of a simply typed syntax, and to prove that the syntactic model is initial, we need a notion of homomorphism. This necessarily involves a compatibility condition between algebraic term structures.

\begin{figure*}
\begin{mdframed}
\[\begin{tikzcd}[row sep=huge]
	{\textstyle
	   \prod_{1 \leq i \leq n} 
	     T_{\denote{A_i}(\pbf C)}
	      ( \Gamma \times \tuple{ \denote{A^i_j}(\pbf C) }_{1\leq j\leq k_i} )
	 } 
	& 
	{\displaystyle 
	   T_{\denote{B}(\pbf C)}
	     ( \Gamma \times \tuple{ \denote{B_j}(\pbf C) }_{1\leq j\leq k} )
	 } 
	\\
	{\textstyle
	   \prod_{1 \leq i \leq n} 
	     T'_{\denote{A_i}'(h(\pbf C))}
	       ( H(\Gamma) \times \tuple{ \denote{A^i_j}'(h(\pbf C)) }_{1\leq j\leq k_i} )
	 } 
	& 
	{\textstyle 
	   T'_{\denote{B}'
	     (h(\pbf C))}(H(\Gamma) \times \tuple{ \denote{B_j}'( h(\pbf C))}_{1\leq j\leq k} )
	  }
	\arrow["{\denop{tm}^\sharp_\Gamma}", from=1-1, to=1-2]
	\arrow["{(\denop{tm}')^\sharp_\Gamma}", from=2-1, to=2-2]
	\arrow["{
	  \prod_{1 \leq i \leq n} \displaystyle
	    (f_{\denote{A_i}(\pbf C)})_{(\Gamma \times \tuple{ \denote{A^i_j}(\pbf C) }_{1\leq j\leq k_i} )}
	  }" description, from=1-1, to=2-1]
	\arrow["{\displaystyle
	  (f_{\denote{B}(\pbf C)})_{(\Gamma \times \tuple{ \denote{B_j}(\pbf C) }_{1\leq i\leq k} )}
	  }" description, from=1-2, to=2-2]
\end{tikzcd}\]
\[
  \pbf C = \tuple{C_1,\ldots,C_m} \in S^m 
  \qquad\qquad 
  h(\pbf C) \defeq \tuple{h(C_1), \ldots, h(C_m)}
\]
\caption{Elementary term algebra coherence}
\label{fig:elementary-term-algebra-coherence}
\end{mdframed}
\end{figure*}

Categorically, this is made somewhat difficult to express by the fact that two typed term structures for the same signature may be algebras for polynomial endofunctors on different presheaf categories, depending on their cartesian \text{$\opty$-typed} context structures. To reconcile them, we will make use of the following lemma.

\begin{lemma}
\label{lemma:poly-commutes-with-hom}
Let $(\sct{C}, \tau : T \to S)$ and $(\sct{C}', \tau' : T' \to S')$ be models, and let 
$( H: \sct C \to \sct C' , h : S \to S' )$ be a cartesian $\opty$-typed context structure homomorphism between them (\Cref{def:context-structure-hom}).
Then there is a canonical natural transformation as follows.
%
\[\begin{tikzcd}
	{\spsh{C'}/S'} & {\spsh{C}/S} \\
	{\spsh{C'}/S'} & {\spsh{C}/S}
	\arrow["{\optm}", from=1-2, to=2-2] \arrow["{\optm[']}"', from=1-1, to=2-1] \arrow["{\pb h \comp \precomp H}", from=1-1, to=1-2] \arrow["{\pb h \comp \precomp H}"', from=2-1, to=2-2] \arrow[Rightarrow, from=1-2, to=2-1, shorten <=2mm, shorten >=2mm]
\end{tikzcd}\]
\end{lemma}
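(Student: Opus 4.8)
The plan is to build the natural transformation componentwise over the summands of the term operator signature $\optm$, exploiting the fact that both $\optm$ and $\optm[']$ are coproducts of polynomial functors indexed by the same list of term operators (since a term operator signature for $\opty$ determines the same list of arities regardless of the context structure). Thus it suffices to produce, for a single term operator $\oper{o}$ with arity as in \eqref{eq:second-order-oper}, a natural transformation $\pb{h}\comp\precomp{H} \circ P^{\oper o}_{\sct C} \natto P^{\oper o}_{\sct C'} \circ \pb{h}\comp\precomp{H}$, where $P^{\oper o}_{\sct C}$ denotes the polynomial functor on $\spsh C/S$ from \Cref{fig:term-polynomial}, and then take the coproduct of these. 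To obtain the single-operator transformation, I would work with the elementary description from \Cref{fig:term-algebra}: unwinding $P^{\oper o}_{\sct C}(\tau : T \to S)$ at a context $\Gamma$ and sort parameter $\pbf C \in S^m$ gives (by \Cref{remark:explicit-dependent-products}) a coproduct over $\pbf C$ of products $\prod_{1 \leq i \leq n} T_{\denote{A_i}(\pbf C)}(\Gamma \times \svc{\denote{A^i_j}(\pbf C)}_j)$, with type component $\denote{B}(\pbf C)$. Transporting along $\pb h \comp \precomp H$ reindexes the base by $h$ and restricts the context along $H$, yielding the top-left corner of \Cref{fig:elementary-term-algebra-coherence}; the target $P^{\oper o}_{\sct C'}(\pb h \comp \precomp H(\tau))$ unwinds to the bottom-left corner. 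The desired map is then the one induced by functoriality of $H$ together with the comparison maps for how $H$ interacts with context extension and how $h$ interacts with the type denotations $\denote{A^i_j}, \denote{B_j}, \denote{B}$.

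The key steps, in order, are: (1) reduce to a single term operator by distributing $\pb h \comp \precomp H$ over coproducts (it is a left adjoint composed with a functor that preserves coproducts, so this is immediate); (2) use \Cref{remark:explicit-dependent-products} to rewrite both $P^{\oper o}_{\sct C}$ and $P^{\oper o}_{\sct C'}$ explicitly as coproducts of products of exponentials-by-$V_A$, i.e.\ context extensions, via the identity $X^{V_A} \equiv X(- \times \svc A)$ recorded after \Cref{eg:presheaf-of-variables}; (3) extract from the context structure homomorphism $(H, h)$ the structural comparison data — namely that $H$ sends $\epsilon$ to $\epsilon'$ and $\Gamma \times \svc A$ to $H(\Gamma) \times \svc{h(A)}'$ (up to the canonical comparison arising from $H$ preserving the chosen products, using \Cref{def:context-structure-hom}), and that $h$ is a $\msigty$-algebra homomorphism so it commutes with $\denote{A^i_j}, \denote{B_j}, \denote{B}$; (4) assemble these into the required natural transformation, checking naturality in $\Gamma$ (this follows from naturality of $H$ and of the context-extension comparison) and compatibility with the reindexing along $h$ on the base. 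Closure of polynomials under coproducts \cite[Proposition~1.12]{gambino2013polynomial} ensures the assembled data is again a natural transformation of polynomial functors.

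The main obstacle is step (3)–(4): making precise the comparison between $H(\Gamma \times \svc A)$ and $H(\Gamma) \times \svc{h(A)}'$. A homomorphism of cartesian $\opty$-typed context structures does \emph{not} require $H$ to preserve the chosen context extensions on the nose; the coherence square in \Cref{def:context-structure-hom} only asserts that the composite $\sct C \times \disc S \to \sct C$ via extension is sent compatibly, so there is a canonical (but not necessarily invertible, and certainly not identity) comparison morphism $H(\Gamma) \times \svc{h(A)}' \to H(\Gamma \times \svc A)$ in $\sct C'$ induced by the universal property of the product $H(\Gamma) \times \svc{h(A)}'$. Precomposing a presheaf $T'$ on $\sct C'$ with this comparison is exactly what turns $T'_{\denote{B}'(h\pbf C)}(H(\Gamma) \times \svc{\denote{B_j}'(h\pbf C)}')$ into something receiving a map from the $H$-transported $T_{\denote B(\pbf C)}(\Gamma \times \svc{\denote{A^i_j}(\pbf C)})$; getting the variances right here (the polynomial is covariant in $T$ but context extension sits inside an exponent, i.e.\ contravariantly) is the delicate bookkeeping. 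Once this comparison is in hand, naturality and coherence are routine diagram chases, and the coproduct over operators gives the statement.
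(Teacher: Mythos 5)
The paper states this lemma without an explicit proof, so there is no official argument to compare against; your overall strategy --- decompose $\optm$ as a coproduct over operators, use that $\pb h \comp \precomp H$ preserves coproducts, and assemble the single-operator component from the structural data carried by $(H, h)$ --- is surely the intended one. Two smaller slips before the main point: your expression $P^{\oper o}_{\sct C'}\big(\pb h \comp \precomp H(\tau)\big)$ does not typecheck, since $\pb h \comp \precomp H$ lands in $\spsh C/S$, on which $P^{\oper o}_{\sct C'}$ does not act --- the transformation must start from $\tau'$ in $\spsh{C'}/S'$ and compare $P^{\oper o}_{\sct C}\big(\pb h \comp \precomp H(\tau')\big)$ with $\pb h \comp \precomp H\big(P^{\oper o}_{\sct C'}(\tau')\big)$; and you only treat the premiss legs of the polynomial of \Cref{fig:term-polynomial}, whereas the $V^k \times S^m$ leg (parameterisation of the conclusion) also needs a comparison, supplied by the canonical map $v : V \to V'H$, $(A, \rho) \mapsto (h(A), H(\rho))$ recorded in the example following the definition of model homomorphisms.

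The genuine gap is your ``main obstacle'', which rests on a misreading of \Cref{def:context-structure-hom}. The square there is required to commute as an equality of functors $\sct C \times \disc S \to \sct C'$, which says exactly that $H(\Gamma \times \svc A) = H(\Gamma) \times \svc{h(A)}'$ and $H(\epsilon) = \epsilon'$ on the nose, on objects and on the morphisms $\rho \times \svc A$; the chosen context extensions \emph{are} preserved strictly, and this strictness is precisely what makes the lemma's transformation canonical (the identification $X^{V_A} \equiv X(- \times \svc A)$ transported along $\precomp H$ matches the primed one exactly, with no variance bookkeeping). Had the preservation genuinely been lax, your argument would be stuck rather than merely delicate: the canonical comparison for a functor between categories with chosen products points the other way, $H(\Gamma \times \svc A) \to H(\Gamma) \times \svc{h(A)}'$, and since context extension sits contravariantly in an exponent you would have to verify that the induced map on $T'(-)$ comes out in the direction demanded by the $2$-cell --- which is exactly the check you defer. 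So, as written, the proof is incomplete at the one point you flag as the crux; reading the definition correctly dissolves that point, after which the rest of your outline (reduction to the elementary components of \Cref{fig:elementary-term-algebra-coherence}, $h$ commuting with the type denotations because it is a $\msigty$-algebra homomorphism, naturality in $\Gamma$ from functoriality of $H$) goes through.
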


\begin{definition}
A \emph{homomorphism of models} for a simply typed syntax, from a model $(\sct{C}, {\tau: T \to S})$ to a model $(\sct{C}', {\tau' : T' \to S'})$, consists of
\begin{itemize}
    \item a cartesian $\opty$-typed context structure homomorphism 
    $( H: \sct C \to \sct C', h : S \to S' )$;
    \item a natural transformation $f: T \to T'H$,
\end{itemize}
such that the following diagrams, term-type coherence (left) and term algebra coherence~(right), commute:
%
\[\begin{tikzcd}
	{T} & {T'H} \\
	{S} & {S'}
	\arrow["{h}"', from=2-1, to=2-2] \arrow["{\tau}"', from=1-1, to=2-1] \arrow["{\tau'H}", from=1-2, to=2-2] \arrow["{f}", from=1-1, to=1-2]
\end{tikzcd}
\qquad
\begin{tikzcd}
	{\optm(T)} && {\optm(\pb{h}(T'H))} \\
	&& {\pb{h}(\optm['](T')H)} \\
	{T} && {\pb{h}(T'H)}
	\arrow["{\denote{\mathsf{tm}}}"', from=1-1, to=3-1] \arrow["{f_{\lrcorner}}"', from=3-1, to=3-3] \arrow["{\optm( f_{\lrcorner})}", from=1-1, to=1-3] \arrow["{\text{(\Cref{lemma:poly-commutes-with-hom})}}", from=1-3, to=2-3] \arrow["{\pb{h}(\denote{\mathsf{tm}}' H)}", from=2-3, to=3-3]
\end{tikzcd}\]
where $f_\lrcorner$ is the mediating morphism as in the following diagram.
\vspace{-1.6ex}
%
\[\begin{tikzcd}
	{T} & {\pb{h}(T'H)} & {T'H} \\
	& {S} & {S'}
	\arrow["{f_{\lrcorner}}" description, from=1-1, to=1-2]
	\arrow["{\tau' H}", from=1-3, to=2-3]
	\arrow["{\tau}"', from=1-1, to=2-2]
	\arrow["{h}"', from=2-2, to=2-3]
	\arrow[from=1-2, to=1-3]
	\arrow["{\pb{h}(\tau' H)}" description, from=1-2, to=2-2]
	\arrow["\lrcorner"{very near start, rotate=0}, from=1-2, to=2-3, phantom]
	\arrow["{f}", from=1-1, to=1-3, bend left]
\end{tikzcd}\]
\end{definition}
The term algebra coherence diagram expresses that $f_\lrcorner$ is an $\optm$-algebra homomorphism. This equivalently expresses that $f$ is a form of term algebra heteromorphism as in the following diagram.
%
\[\begin{tikzcd}[column sep=tiny]
	{\optm(T)} \arrow{r}[yshift=1ex]{\optm(f_{\lrcorner})} & {\optm(\pb{h}(T'H))} \arrow{r}[yshift=1ex]{\text{(\Cref{lemma:poly-commutes-with-hom})}} & {\pb{h}(\optm['](T')H)} \arrow{r}[yshift=1ex]{}
	& {\optm['](T')H} \\
	{T} &&& {T'H}
	\arrow["{\denote{\mathsf{tm}}}"', from=1-1, to=2-1]
	\arrow["{\denote{\mathsf{tm}}'H}", from=1-4, to=2-4]
	\arrow["{f}"', from=2-1, to=2-4]
\end{tikzcd}\]
In elementary terms, this corresponds to the coherence condition expressed in \Cref{fig:elementary-term-algebra-coherence}.
This resolves the compatibility difficulty described earlier.

\begin{example}
For any cartesian $\opty$-typed context structure homomorphism 
$(H, h)$,
there is a canonical model homomorphism 
$(H, h, v)$
for $v : V \to V'H$ given by the action of $H$:
\[
v_\Gamma\big(\, A \, , \, {\rho:\Gamma\to\svc A} \,\big) 
\defeq{}
\big(\, h(A) \, ,\, {H(\rho) : H(\Gamma) \to \svc{h(A)}} \,\big)
\]
\end{example}

Models of simply typed syntax and their homomorphisms, for a simply typed syntax $O$, form a category $\sct{S}_O$.

\begin{theorem}
\label{thm:initiality-syntax}
$\sct{S}_O$ has an initial object.
\end{theorem}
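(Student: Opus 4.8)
The plan is to construct the initial model by freely generating each layer of structure in turn --- first the sorts, then the contexts, then the terms --- and then to prove initiality componentwise, using at each layer the relevant universal property to force the corresponding component of an outgoing homomorphism.

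\emph{The candidate.} The polynomial endofunctor $\opty$ on $\Set$ is finitary, so it has an initial algebra; I take $S_0 \defeq \mopty(\emptyset)$ with that structure. A simply typed syntax carries no type equations, so $\msigty = \mopty$, and by \Cref{prop:initial-cartesian-typed-context-structure} (together with the observation following it) the initial cartesian $\opty$-typed context structure for $S_0$ is $\sct C_0 \defeq \freecart{S_0}$ with its canonical single-variable embedding $\svc{-}$. Since $\optm$ is a finitary endofunctor (\Cref{lemma:sigtm-finitary}) on $\psh{\sct C_0}/S_0$, a slice of a presheaf category and hence cocomplete, it has an initial algebra $\denop{tm}_0 \colon \optm(\tau_0 \colon T_0 \to S_0) \to (\tau_0 \colon T_0 \to S_0)$; concretely $T_0 = \moptm(0)$, the free $\optm$-algebra on the initial object, or equivalently the colimit of the initial $\omega$-chain. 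The triple $(S_0, \sct C_0, \tau_0)$ is then a model, which I claim is initial in $\sct{S}_O$.

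\emph{Initiality.} Let $(\sct C', \tau' \colon T' \to S')$ be an arbitrary model. The sort component of any homomorphism out of the candidate is an $\msigty$-algebra homomorphism, and since $\msigty = \mopty$ and $S_0$ is the initial $\mopty$-algebra it is forced to be the unique such map $h \colon S_0 \to S'$. The context component is then forced as well: $(\sct C_0, S_0)$ is the initial cartesian $\opty$-typed context structure, so there is a unique homomorphism $(H, h) \colon (\sct C_0, S_0) \to (\sct C', S')$, whose sort part is necessarily the $h$ already determined. For the term component I invoke \Cref{lemma:poly-commutes-with-hom}: the canonical $2$-cell $\optm \comp (\pb h \comp \precomp H) \natto (\pb h \comp \precomp H) \comp \optm[']$ exhibits $K \defeq \pb h \comp \precomp H$ as lifting to a functor from $\optm[']$-algebras to $\optm$-algebras, so $K(\tau' \colon T' \to S')$ carries an $\optm$-algebra structure; by initiality of $T_0$ there is a unique $\optm$-algebra homomorphism $f_\lrcorner \colon T_0 \to K(T')$, and composing with the pullback projection gives $f \colon T_0 \to T'H$. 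Term--type coherence holds because $f_\lrcorner$ is a morphism of the slice over $S_0$ and so commutes with the projections to $S_0$; term algebra coherence holds because, once the lifted algebra structure on $K(T')$ is unfolded, it asserts exactly that $f_\lrcorner$ is an $\optm$-algebra homomorphism --- this is the heteromorphic reformulation recorded immediately after the definition of homomorphism of models. Every component having been forced, $(H, h, f)$ is the unique homomorphism $(S_0, \sct C_0, \tau_0) \to (\sct C', \tau')$, so the candidate is initial.

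\emph{Expected difficulty.} Almost every step is an appeal to a universal property already established, and the one point that requires genuine care is the reconciliation in the term layer: that an $\optm$-algebra homomorphism $T_0 \to K(T')$ is precisely a well-behaved term component of a model homomorphism. This amounts to aligning the $2$-cell of \Cref{lemma:poly-commutes-with-hom} with the term-algebra-coherence square in the definition of model homomorphism, which the heteromorphism diagram in the excerpt has essentially already arranged, so in the end it reduces to bookkeeping. A minor preliminary obligation is the existence of $T_0$ itself, which follows from cocompleteness of $\psh{\sct C_0}/S_0$ (immediate, being a slice of a presheaf category) together with finitarity of $\optm$, via the initial-chain construction.
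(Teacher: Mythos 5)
Your proposal is correct and follows essentially the same route as the paper's proof: take the initial $\opty$-algebra, the free cartesian $\opty$-typed context structure on it, and the initial $\optm$-algebra in the resulting slice of a presheaf category (using cocompleteness and \Cref{lemma:sigtm-finitary}), then force each component of an outgoing homomorphism by the corresponding universal property. The only difference is that you spell out the use of \Cref{lemma:poly-commutes-with-hom} to transport the $\optm[']$-algebra structure along $\pb{h}\comp\precomp{H}$ before invoking initiality of $T_0$, a step the paper leaves implicit.
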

\begin{proof}
Let $\denop{ty}: \opty(S) \to S$ be the initial $\opty$-algebra and let $\sct C$ be the free cartesian $\opty$-typed context structure on $S$ as in \Cref{prop:initial-cartesian-typed-context-structure}. 
The slice category $\spsh{C}/S$ is cocomplete and the polynomial endofunctor $\optm$ is finitary (\Cref{lemma:sigtm-finitary}).  Thus, we have an initial 
$\optm$-algebra $\denop{tm} : \optm(\tau\!:\!T \to S) \longrightarrow (\tau\!:\!T \to S)$. 
Then $M \defeq (\sct C, \denop{ty}, \tau\!:\!T \to S, \denop{tm})$ is a model for the signature $O$.

Let $(\sct C', \denop{ty}', \tau'\!:\!T' \to S',  \denop{tm}')$ be a model of simply typed syntax for the signature $O$. There is a unique $\opty$-homomorphism $h: S \to S'$, by initiality of $S$, and ${H: \sct C \to \sct C'}$ is uniquely determined by the freeness of $\sct C$. Furthermore, there is a unique $\optm$-homomorphism ${f: T \to T'H}$ satisfying the coherence conditions by the initiality of $\tau : T \to S$. Finally, $(H, h, f)$ is a unique model homomorphism and $M$ is therefore initial.
\end{proof}

The initial object in $\sct{S}_O$ is the \emph{syntactic model}.
Indeed, according to the viewpoint of initial-algebra semantics~\cite{thomas1976initial}, syntactic models are precisely initial ones, for 
these have
a canonical compositional interpretation into all models and, as such, uniquely characterise any concrete syntactic construction up to isomorphism.  Here, it is further possible to make the finitary semantic construction of \Cref{thm:initiality-syntax} explicit to demonstrate its coincidence with familiar syntactic constructions.

\section{Substitution}
\label{sec:substitution}
\newcommand{\wk}{\oper{wk}}
\newcommand{\contr}{\oper{contr}}
\newcommand{\exch}{\oper{exch}}
\newcommand{\subs}{\oper{subst}}
\newcommand{\var}{\oper{var}}

$\optm$-algebras represent a notion of terms with (sorted and binding) algebraic structure. However, there are still two important concepts that are missing: that of (capture-avoiding) substitution, and that of equations. Substitution must be described before defining equations on terms, as many equational laws (such as the $\beta$-equality of the 
simply-typed \text{$\lambda$-calculus})
involve this meta-operation. Substitution is an important metatheoretic concept even besides this, and is necessary to define the multicategorical composition operation that will appear in some of the models of simple type theories (\Cref{def:multisubstitutional-model}).

To begin to talk about substitution, one must have a notion of \emph{variables as terms}, corresponding to the following structure.
\vspace{-1.6ex}
\[\begin{tikzcd}
	{V} & {T} \\
	& {S}
	\arrow["{\nu}"', from=1-1, to=2-2]
	\arrow["{\var}", from=1-1, to=1-2]
	\arrow["{\tau}", from=1-2, to=2-2]
\end{tikzcd}\]
This structure is not a term operator:
it may instead be added by considering free $\optm$-algebras on the typed term structure of variables, $\nu : V \to S$.

Substitution is traditionally given in one of two forms: single-variable substitution (typically denoted $\subst{t}{u}{x}$) and multivariable substitution (in which terms must be given for every variable in context). When the category of contexts is freely generated, these notions are equivalent.  In our more general setting single-variable substitution is the appropriate primitive notion.

One may present substitution as an operation given by the following rule.
\[\begin{prooftree}
\Hypo{\Gamma, x : A \vdash t : B}
\Hypo{\Gamma \vdash u : A}
\Infer2[$\oper{subst}$]{\Gamma \vdash \subst{t}{u}{x} : B}
\end{prooftree}\]
It corresponds to the polynomial below, according to the general description of \Cref{sec:algebraic-term-structure}.
\[S \xfrom{[\pi_2, \pi_1]} V \times S + S^2 \xto{[\nu \times \id{}, \id{}]} S^2 \xto{\pi_2} S\]
An algebra for the functor induced by this polynomial is given explicitly by a morphism $\subs$ in $\spsh C/S$ as in the diagram below.
%
\[\begin{tikzcd}
	{T} & {V \times T + T \times S} & {\coprod_{A, B \in S} {T_B}^{V_A} \times T_A} & {T} \\
	{S} & {V \times S + S^2} & {S^2} & {S}
	\arrow["{[\nu \times \id{}, \id{}]}"', from=2-2, to=2-3]
	\arrow["{[\pi_2, \pi_1]}", from=2-2, to=2-1]
	\arrow["{\pi_2}"', from=2-3, to=2-4]
	\arrow["{\tau}"', from=1-1, to=2-1]
	\arrow["{\tau}", from=1-4, to=2-4]
	\arrow[from=1-3, to=2-3]
	\arrow["{\subs}", from=1-3, to=1-4, dashed]
	\arrow["{[\pi_2, \pi_1]}"'{yshift=0.5ex}, from=1-2, to=1-1]
	\arrow["\lrcorner"{very near start, rotate=-90}, from=1-2, to=2-1, phantom]
	\arrow["{\tau \times \id{} + \tau \times \id{}}" description, from=1-2, to=2-2]
\end{tikzcd}\]
Here, for expository purposes, we shall equivalently consider the structure as given by a family of morphisms in $\spsh C$,
\[\subs_{A, B} : {T_B}^{V_A} \times T_A \to T_B \qquad (A,B \in S)\]
which is closer to the syntactic intuition.


The substitution operator must obey equational laws (\cf{} \cite[Definition~3.1]{fiore1999abstract} and \cite[Section~2.1]{fiore2014substitution}). This structure must be described semantically,
as it makes use of the implicit cartesian structure of the categories of contexts, which is not available syntactically.
Specifically, we require the following diagrams to commute. They correspond respectively to trivial substitution, left and right identities, and associativity.
%
\[\begin{tikzcd}
	{T_A \times T_B} & {{T_A}^{V_B} \times T_B} \\
	& {T_A}
	\arrow["{\wk \times \id{}}", from=1-1, to=1-2]
	\arrow["{\pi_1}"', from=1-1, to=2-2]
	\arrow["{\subs_{B, A}}", from=1-2, to=2-2]
\end{tikzcd}
\enspace
\begin{tikzcd}
	{1\times T_A} & {{T_A}^{V_A} \times T_A} \\
	& {T_A}
	\arrow["{\lambda(\var_A)\times\id{}}"{yshift=1ex}, from=1-1, to=1-2]
	\arrow["{\subs_{A, A}}", from=1-2, to=2-2]
	\arrow["\pi_2"', from=1-1, to=2-2, no head]
\end{tikzcd}\]
%
\[\begin{tikzcd}
	{{T_B}^{V_A} \times V_A} & {{T_B}^{V_A} \times T_A} \\
	& {T_B}
	\arrow["{\id{} \times \var_A}"{yshift=1ex}, from=1-1, to=1-2]
	\arrow["{\subs_{A, B}}", from=1-2, to=2-2]
	\arrow["{\contr}"', from=1-1, to=2-2]
\end{tikzcd}\]
%
\[\begin{tikzcd}
	{({T_C}^{V_B \times V_A} \times {T_B}^{V_A}) \times T_A} & {{T_C}^{V_A} \times T_A} 
	& 
	\\
	{({T_C}^{V_B \times V_A} \times T_A) \times ({T_B}^{V_A} \times T_A)} 
	& 
	{T_C}
	\\
	{({T_C}^{V_A \times V_B} \times {T_A}^{V_B}) \times ({T_B}^{V_A} \times T_A)} 
	& 
	{{T_C}^{V_B} \times T_B} &
	\arrow["{(\exch \times \wk) \times \id{}}", from=2-1, to=3-1]
	\arrow["{\subs_{B, C}^{V_A} \times \id{}}", from=1-1, to=1-2]
	\arrow["{\subs_{A, C}^{V_B} \times \subs_{A, B}}"', from=3-1, to=3-2]
	\arrow["{\subs_{A, C}}", from=1-2, to=2-2]
	\arrow["{\subs_{B, C}}"', from=3-2, to=2-2]
	\arrow["{\text{str}}", from=1-1, to=2-1]
\end{tikzcd}
\]
The morphism $\exch$ is given by $T^\gamma$ where $\gamma : X \times Y \xto{\equiv} Y \times X$ is the cartesian symmetry; $\wk$ by $X^! : X \to X^Y$; and $\contr$ by the evaluation.
By the extension structure of cartesian $\sigty$-typed context structures, they respectively correspond to the admissible syntactic operations of exchange, weakening, and contraction.  The map $\text{str}$ is the canonical strength of products and the map $\subs_{A,B}^P$ is the composite
\[
  {T_B}^{V_A\times P}\times {T_A}^{P}
  \xto{\equiv}
  ({T_B}^{V_A}\times T_A)^{P} 
  \xto{(\subs_{A,B})^{P}}
  {T_B}^{P}
\]

Crucially, substitution must also commute with all the operators of the theory: for every \emph{unparameterised} operator $\oper{o}$ 
as in \Cref{fig:term-operator}, we require the following diagram to commute for all $\pbf C \in S^m$ and
$D\in S$, where $E_i \defeq \prod_{1 \leq j \leq k_i} V_{\dn{A^i_j}(\pbf C)}$.
%
\[\begin{tikzcd}
	{\big(\prod_{1 \leq i \leq n} {T_{\dn{A_i}(\pbf C)}}^{E_i}\big)^{V_D} \times T_D}
	& 
	{{T_{\dn B(\pbf C)}}^{V_D} \times T_D}
	\\
	{\prod_{1 \leq i \leq n} 
	  {T_{\dn{A_i}(\pbf C)}}^{V_D\times E_i} \times {T_D}^{E_i}} 
	\\
	{\prod_{1 \leq i \leq n} {T_{\dn{A_i}(\pbf C)}}^{E_i}} 
	& 
	{T_{{\dn B}(\pbf C)}}
	\arrow["{(
	  \denop{o}^\sharp)
	  ^{V_D} \times \id{}}"{yshift=1ex}, from=1-1, to=1-2]
	\arrow["{\subs_{D, \dn B(\pbf C)}}" description, from=1-2, to=3-2]
	\arrow["{\prod_{1 \leq i \leq n} \subs^{E_i}_{D,\dn{A_i}(\pbf C)}}", from=2-1, to=3-1]
	\arrow["{\denop{o}^\sharp}"', from=3-1, to=3-2]
	\arrow["{\cong \comp (\cong \times \tuple{\wk_i}_{1 \leq i \leq n})}", 
	  from=1-1, to=2-1]
\end{tikzcd}\]

Models of simply typed syntax with unparameterised term operators may be extended to incorporate variable and substitution structure; together with homomorphisms that preserve this structure, they form a category.

\begin{proposition}
For a term signature $\optm$ with unparameterised operators, models of simply typed syntax with variable and substitution structure over a fixed cartesian typed context structure admit free constructions, and thereby generate a free monad
which we denote by $\moptmsubst$.
\end{proposition}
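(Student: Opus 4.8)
The plan is to realise the category in question as the algebras of an \emph{equational system} in the sense of \cite{fiore2009construction} over the base $\spsh C/S$, and then to invoke the associated free-algebra theorem. Fix a cartesian $\opty$-typed context structure $(\sct C, S)$; observe that $\spsh C/S$, being a slice of a presheaf category, is locally finitely presentable, in particular cocomplete. Besides the term-operator structure, a model carries a variables-as-terms map $\var : V \to T$ over $S$ and a substitution map $\subs$. The first is algebra structure for the constant endofunctor on $\spsh C/S$ at $\nu : V \to S$; the second is algebra structure for the substitution polynomial functor of \Cref{sec:substitution}, induced by $S \xfrom{[\pi_2, \pi_1]} V \times S + S^2 \xto{[\nu \times \id{}, \id{}]} S^2 \xto{\pi_2} S$. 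Writing $F$ for the coproduct on $\spsh C/S$ of $\optm$, this constant functor, and this substitution polynomial functor, an $F$-algebra is exactly an object of $\spsh C/S$ carrying the raw (unconstrained) operator, variable and substitution structure, and $F$-algebra homomorphisms are exactly the maps preserving all of it. Each summand of $F$ is finitary — $\optm$ by \Cref{lemma:sigtm-finitary}, the substitution polynomial by the same result since it has the shape of \Cref{fig:term-polynomial}, and the constant functor trivially — so $F$ is finitary and the free $F$-algebra monad $F^*$ exists.

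Next I would present the laws of \Cref{sec:substitution} — trivial substitution, the two identity laws, associativity, and, for each unparameterised operator $\oper{o}$, compatibility of $\subs$ with $\denop{o}^\sharp$ (quantified over $\pbf C \in S^m$ and $D \in S$) — as a family of equations over $F^*$. Each is a parallel pair of natural transformations $\Gamma \rightrightarrows F^*$ whose common domain $\Gamma$ is assembled from $V$, the representables $V_A$, the cartesian structure of $\sct C$ (which supplies $\wk$, $\contr$, $\exch$ and the strength $\text{str}$ as operations natural in the algebra), finite products, the fibres $T_A$, and exponentials by the $V_A$. Since pullbacks and precomposition by $- \times \svc A$ are exact and filtered colimits in $\spsh C$ are computed pointwise, every such $\Gamma$ is finitary. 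Altogether this data constitutes a finitary equational system $\mathcal E$ on the locally finitely presentable category $\spsh C/S$ whose algebras are exactly the models of simply typed syntax with variable and substitution structure over $(\sct C, S)$, with the intended homomorphisms.

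Finally, since $\spsh C/S$ is cocomplete and $\mathcal E$ is finitary, the free-algebra theorem for equational systems \cite{fiore2009construction} applies: free algebras exist, the forgetful functor to $\spsh C/S$ is monadic, and the induced monad — which we denote $\moptmsubst$ — is the sought-after free monad. The hard part is entirely organisational: one must check that every axiom, in particular associativity with its exchange-and-strength reshuffling and the operator-compatibility square, genuinely presents as a parallel pair $\Gamma \rightrightarrows F^*$ with $\Gamma$ finitary, and that the derived operations $\wk$, $\contr$, $\exch$, $\text{str}$ and the exponentials by representables are natural in the $F$-algebra; once this is in place, finitariness and the appeal to \cite{fiore2009construction} are routine.
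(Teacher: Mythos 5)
Your proposal is correct and is essentially the argument the paper intends: the paper states this proposition without an explicit proof, but its repeated appeals to the free-algebra theorem for finitary equational systems of \cite{fiore2009construction} (used for $\mopty$, $\msigty$ and $\moptm$) make clear that the intended construction is exactly the one you give --- a finitary functorial signature on the cocomplete category $\spsh C/S$ collecting the operator, variable and substitution structure, with the substitution laws (including operator-compatibility, which is precisely where the restriction to unparameterised operators enters) imposed as a finitary equational system. The organisational points you flag --- that each equation's context functor is finitary and that the derived operations $\oper{wk}$, $\oper{contr}$, $\oper{exch}$ and $\mathrm{str}$ are natural in the carrier --- are the right things to verify, and they do go through since all are built from finite products, pullbacks, and exponentiation by representables, each of which preserves filtered colimits in a presheaf category.
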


\begin{proposition}
$\moptmsubst$-algebras for the free cartesian \text{$([\,], [\,])$-typed} context structure on a single sort (\Cref{prop:initial-cartesian-typed-context-structure}) are, equivalently, $\optm$-substitution algebras~
\cite{fiore1999abstract}.
\end{proposition}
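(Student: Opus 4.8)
The plan is to exhibit a precise dictionary between the data of an $\moptmsubst$-algebra over the free cartesian $([\,],[\,])$-typed context structure on a single sort and the data of an $\optm$-substitution algebra in the sense of \cite{fiore1999abstract}, and then check that morphisms on each side correspond. First I would unfold both sides explicitly. On the left, the free cartesian $([\,],[\,])$-typed context structure on a single sort $S = 1$ is, by \Cref{prop:initial-cartesian-typed-context-structure}, $\freecart{\msigty(1)} = \freecart{1}$, whose opposite is (the skeleton of) $\FinSet$; hence $\spsh{C}$ is the classical category $\pshF$ of presheaves on finite sets, and since $S = 1$ the slice $\spsh{C}/S$ is just $\pshF$ itself. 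In this setting the presheaf of variables $\nu : V \to S$ becomes the presheaf $V = \yonem 1$, i.e.\ the inclusion $\FinSet \injto \Set$ regarded as a presheaf, which is exactly the object $V$ that plays the role of the variable presheaf in \cite{fiore1999abstract}; exponentiation $(-)^{V}$ is the context-extension endofunctor $(-)(- + 1)$, recovering the ``$\delta$'' functor of loc.\ cit.

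Next I would match the algebraic structure piece by piece. An $\moptmsubst$-algebra carries: (i) an $\optm$-algebra structure $\denop{tm}$, which under the above identifications is precisely a family of operations of the shape dictated by the second-order arities in $\optm$ — these are the ``$\Sigma$-algebra'' operations of \cite{fiore1999abstract}; (ii) a ``variables as terms'' map $\var : V \to T$ over $S$, which is the unit/point map; and (iii) a substitution operation $\subs_{A,B} : T_B^{V_A} \times T_A \to T_B$ which, since $S=1$, collapses to a single map $T^{V} \times T \to T$, i.e.\ a substitution structure in the sense of loc.\ cit. The equational laws imposed on $\moptmsubst$-algebras — trivial substitution, left and right unit laws, associativity (the four displayed diagrams), and compatibility of $\subs$ with each unparameterised operator of $\optm$ — then translate, one diagram at a time, into exactly the axioms of a $\Sigma$-substitution algebra (respectively: the substitution monoid laws and the $\Sigma$-compatibility/pointed-strength condition). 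Here I would invoke the standard fact \cite{fiore1999abstract,fiore2014substitution} that a substitution structure on $T$ satisfying these laws is equivalent to a monoid structure on $T$ in $(\pshF, \bullet, V)$ for the substitution tensor, so that $\Sigma$-substitution algebras are $(\Sigma + V)$-monoids; the bookkeeping is to see that the weakening/exchange/contraction maps $\wk,\exch,\contr$ appearing in our diagrams are precisely the structure maps induced by the cartesian (hence, by restriction along $\FinSet \injto \Set$, the relevant) structure, and that the strength map $\text{str}$ matches the pointed strength of the substitution tensor.

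Finally I would check that this correspondence is functorial: a homomorphism of $\moptmsubst$-algebras over the fixed context structure is a natural transformation preserving $\denop{tm}$, $\var$, and $\subs$, which is exactly a morphism of $\Sigma$-substitution algebras; conversely any such morphism yields a homomorphism in our sense because over a single sort the coherence condition of \Cref{fig:elementary-term-algebra-coherence} degenerates (taking $H = \id{}$, $h = \id{}$) to preservation of the operations. This gives an isomorphism of categories, which is the claimed statement.

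I expect the main obstacle to be the precise matching of the substitution axioms: our four displayed laws and the operator-compatibility square are phrased using the ambient cartesian structure of $\sct{C}$ and the exponential notation $(-)^{V_A}$, whereas \cite{fiore1999abstract} phrases them via the substitution monoidal structure $(\bullet, V)$ on $\pshF$. Bridging the two requires carefully identifying $T^{V_A} \times T_A$ with the relevant component of $T \bullet T$ and checking that associativity of $\subs$ as we state it (with the exchange/weakening reshuffling $(\exch \times \wk)\times \id{}$ and the strength $\text{str}$) is literally the associativity of $\bullet$ — a diagram chase that is routine in principle but where sign/reindexing conventions must be pinned down. Everything else (the unfolding of $\freecart{1}\op \cong \FinSet$, the collapse of the slice since $S=1$, and the functoriality check) is bookkeeping.
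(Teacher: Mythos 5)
The paper asserts this proposition without proof, so there is no official argument to compare against; your proposal supplies the expected one, and it is essentially correct. The unfolding is right on all the key points: with $\sigty=([\,],[\,])$ the monad $\msigty$ is the identity, so the context category is $\freecart{\{*\}}\cong\FinSet\op$, the ambient category $\spsh{C}/S$ collapses to $[\FinSet,\Set]$, the presheaf of variables becomes the inclusion $\FinSet\injto\Set$, exponentiation by $V$ becomes the context-extension functor $X\mapsto X(-+1)$, and the single-sorted substitution map $\subs_{*,*}:T^{V}\times T\to T$ is componentwise $T(n+1)\times T(n)\to T(n)$, which together with $\var$ and the four displayed laws plus operator-compatibility is literally Definition~3.1 of \cite{fiore1999abstract}. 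One caveat: your invocation of the equivalence between substitution algebras and monoids for the substitution tensor on $[\FinSet,\Set]$ is a detour you do not need and should not lean on. The proposition compares two \emph{single-variable} substitution presentations, so the content is a direct axiom-by-axiom match (weakening/unit/associativity/compatibility against the corresponding clauses of loc.\ cit.); routing through the monoidal reformulation imports a genuinely deeper equivalence (simultaneous versus single-variable substitution) that is not required here and, if taken as the main step, would obscure rather than establish the claim. Keeping the argument at the level of the direct dictionary, as the rest of your write-up does, is the right call.
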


\begin{theorem}[Substitution lemma, \cf{}~\cite{fiore1999abstract,fiore2008second}]
\label{thm:substitution-lemma}
For a fixed $\sigty$-algebra and the free cartesian $\sigty$-typed context structure
thereon, provided that the signature $\optm$ contains only unparameterised operators, the free $\moptm$-algebra on $\nu : V \to S$ and the initial \text{$\moptmsubst$-algebra} are isomorphic.
\end{theorem}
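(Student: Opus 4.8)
The plan is to prove the isomorphism by identifying both sides with a common third object: the free algebra for the combined monad on $\nu : V \to S$. Concretely, I would first observe that an $\moptmsubst$-algebra packages together three pieces of structure — an $\optm$-algebra structure, a variable-as-terms map, and a substitution operation satisfying the coherence laws listed in \Cref{sec:substitution} — and that this structure is monadic over $\spsh C/S$ by the preceding proposition. The initial $\moptmsubst$-algebra is therefore the free such algebra on the initial object $0$ of $\spsh C/S$. The key idea is that the variable structure $\var : V \to T$ is exactly the data of a map $\nu \to (\tau : T \to S)$ in $\spsh C/S$, so that giving an $\moptmsubst$-algebra whose variable map is the canonical inclusion is the same as giving an $\optm$-substitution-compatible algebra \emph{under} $\nu : V \to S$; freely generating this from nothing is the same as freely generating the substitution-compatible structure from $\nu : V \to S$.

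The main work is then to show that, on the free $\optm$-algebra $\moptm(\nu : V \to S)$, there is a \emph{canonical} substitution operation making it an $\moptmsubst$-algebra, and that this exhibits it as the \emph{free} one. For the first part I would define $\subs_{A,B}$ by structural recursion on the free $\optm$-algebra: on variables it is dictated by the left/right identity laws (substituting into a variable either returns the term being substituted, or returns the variable unchanged via contraction), and on an application of an operator $\oper o$ it is dictated by the compulsory commutation square with $\oper o^\sharp$ displayed at the end of \Cref{sec:substitution}. This is where unparameterisedness of the operators is essential: for a parameterised operator the commutation square would not typecheck, since substitution would have to pass under the binder $\tuple{B_j}$, and the recursion would not close. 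Verifying that the recursively-defined $\subs$ satisfies trivial substitution, the identity laws, and associativity is then a (lengthy but routine) induction on term structure, using the strength and the cartesian structure of $\sct C$; the associativity case is the delicate one and mirrors the corresponding verification in \textcite{fiore1999abstract} and \cite{fiore2008second}, which I would cite rather than fully redo.

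For the universal property, I would show that any $\moptmsubst$-algebra $\denop{tm}' : \optm(T') \to T'$ equipped with a map $\nu \to (\tau' : T' \to S)$ receives a unique morphism from $\moptm(\nu : V \to S)$ that is simultaneously an $\optm$-algebra homomorphism, respects variables, and commutes with $\subs$. Existence and uniqueness of the $\optm$-homomorphism part is immediate from freeness of $\moptm(\nu : V \to S)$ as an $\optm$-algebra on $\nu : V \to S$; the content is that this unique map \emph{automatically} commutes with the substitution operations, which again follows by induction on terms since on both sides $\subs$ was defined by the same recursion clauses (identity laws on variables, the commutation square on operators). Assembling these, $\moptm(\nu : V \to S)$ with its canonical substitution structure satisfies the universal property defining the initial $\moptmsubst$-algebra, hence the two are isomorphic.

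The step I expect to be the main obstacle is the well-definedness of the recursive definition of $\subs$ and, relatedly, the verification of the substitution coherence laws on the free $\optm$-algebra — in particular associativity, which requires carefully threading the strength $\text{str}$ and the exchange/weakening maps through nested binders, and checking that the recursion genuinely respects the presheaf naturality (i.e. commutes with context renamings). Everything else is either a direct appeal to freeness/monadicity already established in the excerpt, or a mechanical induction.
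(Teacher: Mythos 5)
The paper gives no proof of this theorem at all --- it is stated with a pointer to \cite{fiore1999abstract,fiore2008second}, where the corresponding result (the free algebra on the presheaf of variables is the initial substitution algebra) is established. Your proposal correctly reconstructs that standard argument: define substitution on the free $\moptm$-algebra on $\nu : V \to S$ by structural recursion (the variable case fixed by the identity laws, the operator case by the commutation square with $\denop{o}^\sharp$, which is precisely where unparameterisedness is needed for the recursion to close), verify the substitution axioms by induction, and conclude initiality because the unique $\optm$-homomorphism out of the free algebra automatically preserves the recursively defined substitution; this is essentially the route the paper is implicitly relying on, so there is nothing to object to.
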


From the syntactic viewpoint, this means that substitution is admissible: adding a substitution operator to a simply typed syntax leaves the associated terms unchanged, because a term involving substitution is always equal to one that does not involve substitution.

\begin{proposition}
\label{prop:admissible-parameterised-operators}
In the presence of weakening and exchange (present in the simple type theories we consider here) and substitution, parameterised term operators (\Cref{def:parameterised-term-operators}) are admissible.
\end{proposition}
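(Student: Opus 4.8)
The plan is to show that, over a context structure equipped with variable, weakening, exchange, and substitution structure, a signature containing a parameterised operator has the same models — hence the same initial model — as the signature obtained by replacing that operator with an unparameterised one. Concretely, given a parameterised operator $\oper{o}$ of arity $(A^1_1,\ldots,A^1_{k_1})A_1, \ldots, (A^n_1,\ldots,A^n_{k_n})A_n \to (B_1,\ldots,B_k)B$ as in \eqref{eq:second-order-oper} with $k \neq 0$ (\Cref{def:parameterised-term-operators}), I would introduce the unparameterised operator $\oper{o}^-$ of arity $(A^1_1,\ldots,A^1_{k_1})A_1, \ldots, (A^n_1,\ldots,A^n_{k_n})A_n, B_1, \ldots, B_k \to B$, obtained by demoting the $k$ parameters to $k$ further non-binding arguments, and exhibit the two presentations as equivalent.

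The core step is a natural bijection between the algebra structure $\denote{\oper{o}}^\sharp$ of \Cref{fig:term-algebra} and the corresponding structure $\denote{\oper{o}^-}^\sharp$. In one direction, from $\denote{\oper{o}^-}^\sharp$ one defines $\denote{\oper{o}}^\sharp_\Gamma$, landing in $T$ over the context $\Gamma \times \svc{\denote{B_j}(\pbf C)}_{1\leq j\leq k}$, by weakening each premiss $t_i$ past the block $\svc{\denote{B_j}(\pbf C)}_{1\leq j\leq k}$ — weakening together with exchange, since this block is inserted before the binders $\svc{\denote{A^i_j}(\pbf C)}_{1\leq j\leq k_i}$ — and then applying $\denote{\oper{o}^-}^\sharp$ with the last $k$ arguments instantiated to the fresh variables of types $\denote{B_j}(\pbf C)$ via the variables-as-terms structure. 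In the other direction, from $\denote{\oper{o}}^\sharp$ one defines $\denote{\oper{o}^-}^\sharp$ on premisses $t_i$ together with arguments $s_j$ of type $\denote{B_j}(\pbf C)$ by weakening the $t_i$ as above, applying $\denote{\oper{o}}^\sharp$ to obtain a term over $\Gamma \times \svc{\denote{B_j}(\pbf C)}_{1\leq j\leq k}$, and then substituting $s_j$ for the $j$th fresh variable for each $j$; this is the point at which the substitution structure is genuinely used. Both round trips collapse to the identity by the substitution laws of \Cref{sec:substitution}: triviality of substitution into a weakened term, the left-identity law (substituting a variable for itself), and the commutation of substitution with operators. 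Naturality in $\Gamma$ and in $\pbf C$, and compatibility with the structural morphisms, are routine; one must also check that the translated operators again satisfy the operator/substitution interchange law (the last diagram of \Cref{sec:substitution}), which is a diagram chase from associativity and the identity laws.

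This bijection is compatible with homomorphisms, so over a fixed type algebra and a cartesian typed context structure carrying variable, weakening, exchange, and substitution structure, the category of models of the syntax with $\oper{o}$ is isomorphic, over the evident forgetful functors, to that of the syntax with $\oper{o}^-$. In particular the initial (syntactic) models correspond, so every term built using $\oper{o}$ is, up to this correspondence, one built using only $\oper{o}^-$ together with the structural operations — that is, $\oper{o}$ is admissible. Iterating over all parameterised operators of a signature yields the statement.

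I expect the main obstacle to be bookkeeping rather than conceptual difficulty: keeping track of how the parameter block $\svc{\denote{B_j}(\pbf C)}_{1\leq j\leq k}$ interleaves with each argument's binder block $\svc{\denote{A^i_j}(\pbf C)}_{1\leq j\leq k_i}$, which is exactly what forces the appeal to exchange, and verifying that the operator/substitution interchange law is preserved in both directions, the one place where weakening, exchange and substitution are all needed at once.
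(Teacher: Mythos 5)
The paper asserts this proposition without giving a proof, so there is no official argument to compare against; judged on its own, your reduction is the natural one and I believe it is essentially correct. Demoting the $k$ parameters of $\oper{o}$ to $k$ extra non-binding arguments of an unparameterised $\oper{o}^-$, recovering $\oper{o}$ from $\oper{o}^-$ by weakening each premiss past the parameter block (which is where exchange is forced, since that block must be interleaved before each binder block) and instantiating the new arguments at the fresh variables, recovering $\oper{o}^-$ from $\oper{o}$ by substituting for the parameters, and collapsing the round trips via the substitution laws of \Cref{sec:substitution} is exactly what the statement calls for, and it correctly isolates where weakening, exchange and substitution are each used; passing to an equivalence of categories of models and then to initial objects is the right bridge to the syntactic reading of admissibility. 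Two small corrections. First, in the direction defining $\denote{\oper{o}^-}^\sharp$ from $\denote{\oper{o}}^\sharp$ you should \emph{not} weaken the $t_i$: the premisses of $\denote{\oper{o}}^\sharp_\Gamma$ in \Cref{fig:term-algebra} already live over $\Gamma \times \svc{\denote{A^i_j}(\pbf C)}_{1\leq j\leq k_i}$ with no parameter block, so one applies $\denote{\oper{o}}^\sharp_\Gamma$ to the $t_i$ as given, lands over $\Gamma \times \svc{\denote{B_j}(\pbf C)}_{1\leq j\leq k}$, and then substitutes the $s_j$; as literally written, your composite would land over $\Gamma$ extended by two copies of the parameter block. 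Second, the round trip starting from $\oper{o}$ needs, besides the laws you list, the naturality of $\denote{\oper{o}}^\sharp$ in $\Gamma$ (to commute $\denote{\oper{o}}^\sharp$ past the weakening of its premisses) followed by the contraction law $\oper{subst}\circ(\id{}\times\oper{var})=\oper{contr}$ rather than the trivial-substitution law, since the weakened conclusion genuinely depends on one copy of the parameter block. Neither point affects the viability of the argument.
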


\section{Equations on terms}
\label{sec:equations}

Equations on terms may now be treated, analogously to those on types, with the proviso that one must keep track of sorts and variable contexts. In particular, we are interested in terms parameterised by a number of type metavariables, and term metavariables in
extended contexts~\cite{hamana2004free, fiore2008second}.
\begin{figure*}
\begin{mdframed}
\[\begin{prooftree}
\Hypo{\trec{\twt{x^1_1}{A^1_1}, \ldots, \twt{x^1_{k_1}}{A^1_{k_1}}}{t_1}{A_1}}
\Hypo{\cdots}
\Hypo{\trec{\twt{x^n_1}{A^n_1}, \ldots, \twt{x^n_{k_n}}{A^n_{k_n}}}{t_n}{A_n}}
\Infer3{\trec{\twt{y_1}{B_1}, \ldots, \twt{y_k}{B_k}}{l \jeq r}{B}}
\end{prooftree}\]
\caption{Natural deduction rule for a term equation}
\label{fig:term-equation}
\end{mdframed}
\end{figure*}
\begin{notation}
For ${m\in\nat}$, let $K_m$ denote the free $\sigty$-algebra $\msigty{(\underline m)}$ on a set of type metavariables $\underline m$ and let $\sct K_m$ denote the category of contexts $\freecart{K_m}$
of the free cartesian \text{$\sigty$-typed} context structure on a set of type metavariables $\underline{m}$ (\Cref{prop:initial-cartesian-typed-context-structure}).
\end{notation}

\begin{definition}
An \emph{$\optm$-term equation} is given by a triple 
\begin{equation}
\label{eq:term-equation}
\big(\,
     m \in \nat 
     \, , \, 
     \big(\mathbf{A}_1,\ldots\mathbf{A}_n\to\mathbf{B}\big) \in \ar_2\big(\mopty(\underline{m})\big) 
     \, , \, 
     ( l , r )
\,\big)
\end{equation}
with $l, r : \prod_{1 \leq j \leq k} V_{B_j} \to \moptmsubst( p : P \to K_m )_{B}$ a parallel pair of morphisms in $\widehat{\sct K_m}$ for
\begin{align*}
    P & \defeq \coprod_{1 \leq i \leq n} \prod_{1 \leq j \leq k_i} V_{A^i_j} &
    p \big( \tuple{ i , (\rho_1,\ldots,\rho_{k_i}) } \big) & \defeq A_i
\end{align*}
where $\mathbf{A}_i = (A^i_1, \ldots, A^i_{k_i})A_i$ and $\mathbf{B}=(B_1,\ldots,B_k)B$.
\end{definition}

The parallel pair equivalently corresponds to a pair of terms in $\moptmsubst(P)_{B}(\svc{B_1,\ldots,B_k})$
which may be syntactically presented as in \Cref{fig:term-equation}.

\begin{definition}
An \emph{equational term signature}, typically denoted $\sigtm$, is given by a term operator signature $\optm$ and a list $\eqtm$ of $\optm$-term equations.
\end{definition}

Fix an $\optm$-term equation as in~\eqref{eq:term-equation} and consider an \text{$\optm$-algebra} in $\spsh{\sct C}/S$:
\begin{equation}
\label{eq:optm-algebra}
    \denop{tm} : \optm(\tau\!:\!T \to S) \longrightarrow (\tau\!:\!T \to S)
\end{equation}
Every $\pbf C \in S^m$ freely induces a homomorphism $(H, h) : (\sct K_m, K_m) \to (\sct C, S)$, and every morphism $\mathbf t$ in $\psh{\sct K_m}/{K_m}$ as below
\begin{equation}
\label{eq:term-assignment}
\begin{tikzcd}[row sep=scriptsize, column sep=scriptsize]
    P & & h^*(TH) 
    \\
    & K_m &
    \arrow["\displaystyle \mathbf t", from=1-1, to=1-3]
    \arrow["\displaystyle p"', from=1-1, to=2-2]
    \arrow["\displaystyle h^*(\tau H)", from=1-3, to=2-2]
\end{tikzcd}
\end{equation}
freely induces the following situation, analogously to \eqref{eq:type-equation-substitution}.
\[\begin{tikzcd}
	{\optm(\moptmsubst( P ))} 
	&& 
	{\optm(\pb{h}(TH))} 
	\\
	{\moptmsubst( P )} 
	&& 
	{\pb{h}(TH)} 
	\\
	& P 
	&&
	\arrow["{\eta}", from=3-2, to=2-1]
	\arrow["{\psi_{\mathbf t}}"', from=2-1, to=2-3, dashed]
	\arrow["{\denop{tm}^\oast}"', from=1-1, to=2-1]
	\arrow["{\pb{h}(\denote{\mathsf{tm}} H)\ \comp\ \text{(\Cref{lemma:poly-commutes-with-hom})}}", from=1-3, to=2-3]
	\arrow["{\optm(\psi_{\mathbf t})}", from=1-1, to=1-3]
	\arrow["\mathbf t"', from=3-2, to=2-3]
\end{tikzcd}\]

\begin{definition}
An $\optm$-algebra as in \eqref{eq:optm-algebra} satisfies an \text{$\optm$-term} equation as in~\eqref{eq:term-equation} whenever, for all $\pbf C$ and $\mathbf t$ as in the preceding discussion,
$\psi_{\mathbf t} \comp l = \psi_{\mathbf t} \comp r$.
\end{definition}

A morphism $\mathbf t$ as in~\eqref{eq:term-assignment} corresponds to a family
\[
t_i \in T_{\dn{A_i}(\pbf C)}(\tuple{\dn{A^i_1}(\pbf C),\ldots,\dn{A^i_{k_i}}(\pbf C)})
\qquad 
(1\leq i\leq n)
\]
As such, it provides a valuation for the term placeholders of the terms in the equation. Indeed, the evaluation of $\psi_{\mathbf t}$ at ${u \in \moptmsubst(P)_{B}(\tuple{B_1,\ldots,B_k}})$ is the 
term 
resulting from a meta-substitution operation replacing the term placeholders in $u$ with the concrete terms
$(t_i)_{1 \leq i \leq n}$.

\begin{definition}
Given an equational term signature $\sigtm = (\optm, \eqtm)$, a \emph{$\sigtm$-algebra} is an $\optm$-algebra that satisfies the equations of $\eqtm$.
\end{definition}

Equational term signatures (like equational type signatures) are an entirely syntactic notion and correspond exactly to systems of natural deduction rules presenting a simple type theory. We give examples.

\begin{example}
\label{eg:equational-presentations}
Equational presentations in multisorted universal algebra \cite{birkhoff1970heterogeneous} are examples of equational term signatures, whose operators are nonbinding and unparameterised.
\end{example}

\begin{notation}
We will informally denote by ${t : {(x\!:\!A)}B}$ a term metavariable $t$ of type $B$ in contexts extended by a fresh variable $x$ of type $A$, reminiscent of the notation for second-order arities (\Cref{not:second-order-arity}). The types of bound variables in term operators, and of terms themselves, may be inferred, and are elided.
\end{notation}

\begin{example}[$\beta$/$\eta$ rules for the \stlc{}]
\begin{align*}
    A, B : * \rhd t : (x\!:\!A)B, a : A 
    & \vdash 
    \oper{app}\big(\oper{abs}\big(\bind{z
      }{\subst{t}{z}{x}}\big), a\big) 
    \jeq 
    \subst{t}{a}{x} 
    \\[2.5pt]
    A, B : * \rhd f : \oper{Fun}(A, B) 
    & \vdash 
    \oper{abs}\big(\bind{x
      }{\oper{app}(f, x)}\big) 
    \jeq 
    f
\end{align*}
\end{example}

{
\allowdisplaybreaks
\begin{example}[Computational $\lambda$-calculus \cite{moggi1991notions}]
\label{example:computational-lambda-calculus}
The following extends the \stlc{}.
\begin{align*}
    & \rhd \oper{T} : * \to * 
    \\[.5em]
    A : * & \rhd \oper{return} : A \to \oper{T}(A) 
    \\[.25em]
    A, B : * & \rhd \oper{bind} : \oper{T}(A), \bind{A}{\oper{T}(B)} \to \oper{T}(B) 
    \\[.5em]
    A, B : * & \rhd a : A, f : \bind{x\!:\!A}{\oper{T}(B)} 
    \\ 
    & \!\!\!\!\!\!\!\!\!\!\!\!
    \vdash 
      \oper{bind}\big(\oper{return}(a), \bind{z
        }{\subst f z x}\big) 
      \jeq 
      \subst{f}{a}{x} 
    \\[.25em]
    A : * & \rhd m : \oper{T}(A) 
    \vdash 
      \oper{bind}\big(m, \bind{x
        }{\oper{return}(x)}\big) 
      \jeq 
      m 
    \\[.25em]
    A, B : * & \rhd m : \oper{T}(A), f : \bind{x\!:\!A}{\oper{T}(B)}, g : \bind{y\!:\!B}{\oper{T}(C)} 
    \\ & \!\!\!\!\!\!\!\!\!\!\!\!
    \vdash
      \oper{bind}\big(m, \bind{a
        }{\oper{bind}\big(\subst{f}{a}{x}, \bind{b}{\subst g b y}\big)}\big) 
    \\
    & \!\!\!\!\!
      \jeq 
      \oper{bind}\big(\oper{bind}(m,(a)\subst f a x),(b)\subst g b y\big)
\end{align*}
\end{example}
}

Models of simply typed syntax with variable and substitution structure may be restricted to algebras for equational term signatures.

\begin{proposition}
\label{prop:msigtmsubst}
Algebras for equational term signatures $\sigtm$ with unparameterised operators over a fixed cartesian typed context structure admit free constructions, and thereby generate a free monad, which 
we denote by $\msigtmsubst$.
\end{proposition}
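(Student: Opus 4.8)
The plan is to realise $\sigtm$-algebras as the algebras of an \emph{equational system} over the free monad $\moptmsubst$ supplied by the preceding proposition, in the sense of \cite{fiore2009construction}, and then invoke the general free-model construction for such systems. Throughout, the ambient category is the slice $\spsh{\sct C}/S$ for the fixed cartesian typed context structure $(\sct C, S)$.

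First I would record the relevant size conditions. The category $\spsh{\sct C}/S$ is locally finitely presentable, being a slice of the presheaf category $\spsh{\sct C}$; and the monad $\moptmsubst$ is finitary. For the latter: the operator endofunctor $\optm$ is finitary (\Cref{lemma:sigtm-finitary}); the "variables as terms" structure is added by a free construction with respect to the fixed object $\nu : V \to S$, which does not disturb finitariness; and the substitution operator is built from the functors $(-)^{V_A}$, each of which, since $X^{V_A} \equiv X(- \times \svc A)$ is restriction of presheaves along an endofunctor of $\sct C$, preserves all colimits. As free monads on finitary endofunctors of locally finitely presentable categories are finitary, and quotienting by the (finitary) substitution laws preserves this, $\moptmsubst$ is finitary; in particular $\alg{\moptmsubst}$ is again locally finitely presentable and monadic over $\spsh{\sct C}/S$ via a finitary monad.

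Next I would encode the equations. Fix an $\optm$-term equation as in~\eqref{eq:term-equation}. The parallel pair $(l,r)$, together with the universal construction producing $\psi_{\mathbf t}$ from a valuation $\mathbf t$ as in the diagram following~\eqref{eq:term-assignment}, assembles into a single parallel pair of natural transformations between finitary endofunctors on $\alg{\moptmsubst}$: the source functor sends a $\moptmsubst$-algebra $(\tau : T \to S, \ldots)$ to the object that carries, for every $\pbf C \in S^m$ and every valuation $\mathbf t$ of the term placeholders (an object which is a coproduct, over these data, of copies of $\prod_{1 \leq j \leq k} V_{B_j}$, hence built from representables and finitary in $T$), and the two components postcompose $l$, respectively $r$, with the mediating map $\psi_{\mathbf t}$; naturality is exactly the stability of $\psi_{\mathbf t}$ under $\moptmsubst$-algebra homomorphisms, which is the freeness clause in its construction. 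By design a $\moptmsubst$-algebra equalises this pair precisely when it satisfies the equation; ranging over all of $\eqtm$ exhibits the $\sigtm$-algebras as the algebras of an equational system over $\moptmsubst$ whose functor part is accessible (indeed finitary), on the locally finitely presentable category $\spsh{\sct C}/S$. Applying the free-model theorem for equational systems \cite{fiore2009construction} then gives that the forgetful functor $\sigtm\text{-alg} \to \spsh{\sct C}/S$ has a left adjoint — the quotient of $\moptmsubst$ by the system, obtained by a transfinite iteration that repeatedly coequalises the two legs of each equation at every valuation and reflects back into $\alg{\moptmsubst}$, terminating since everything in sight is finitary — and the induced monad is the desired $\msigtmsubst$.

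The main obstacle is the middle step: faithfully internalising the "for all $\pbf C$ and all valuations $\mathbf t$" quantification in the satisfaction clause as an honest parallel pair of natural transformations on $\alg{\moptmsubst}$ (equivalently, as an equational system), so that the machinery of \cite{fiore2009construction} is literally applicable. Verifying that the source functor so described is accessible, and that the two legs are genuinely natural — which amounts to unwinding the universal property defining $\psi_{\mathbf t}$ and checking its compatibility with $\moptmsubst$-algebra homomorphisms — is the part that demands care; once this is in place, the appeal to the general construction, and the passage from "free objects exist" to "free monad", is routine.
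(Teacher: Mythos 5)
Your proposal is correct and follows essentially the route the paper intends: the paper offers no explicit proof of this proposition, but its repeated appeals to \cite{fiore2009construction} for the analogous free-monad constructions make clear that the intended argument is exactly yours — present $\sigtm$-algebras as algebras for a finitary equational system over the finitary monad $\moptmsubst$ on the locally finitely presentable category $\spsh{\sct C}/S$ (with the quantification over $\pbf C$ and valuations $\mathbf t$ internalised via copowers by hom-sets out of the finitely presentable object $P$), and invoke the free-algebra theorem for such systems. Your identification of the delicate point — checking that the legs built from $\psi_{\mathbf t}$ are natural in the algebra, which follows from the universal property defining $\psi_{\mathbf t}$ — is also the right place to focus the verification.
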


The monad associated to an equational term signature $(\optm, [\,])$ is the free $\optm$-monad with variable and substitution structure $\moptmsubst$.  For any list of $\optm$-term equations $\eqtm$, there is a canonical quotient monad morphism $\moptmsubst \epito \msigtmsubst$.

\section{Models of simple type theories}
\label{sec:models-theories}

Simple type theories extend simply typed syntax by incorporating variable, substitution, and equational structure.

\begin{definition}
A \emph{simple type theory} consists of:
\begin{itemize}
    \item an equational type signature $\sigty$;
    \item an equational term signature $\sigtm$ for $\sigty$.
\end{itemize}
\end{definition}

\begin{definition}
A \emph{model} for a simple type theory consists of
\begin{itemize}
    \item a $\msigty$-algebra $\denop{ty} : \msigty(S) \to S$;
    \item a cartesian $\sigty$-typed context structure $\sct C$ for $S$;
    \item a $\msigtmsubst$-algebra 
      $\denop{tm} : \msigtmsubst (\tau\!:\!T \to S) \to (\tau\!:\!T \to S)$.
\end{itemize}
In particular, the type and term algebras both satisfy the specified equations.
\end{definition}

\begin{definition}
    A \emph{homomorphism of models} for a simple type theory is a homomorphism $(H, h, f)$ for the underlying simply typed syntax such that $f$ preserves the variable structure and is a heteromorphism for the substitution structure.
\end{definition}

Models of simple type theories and their homomorphisms, for a simple type theory $\Sigma$, form a category $\sct{S}_\Sigma$.

\begin{theorem}
\label{thm:initiality-theory}
$\sct{S}_\Sigma$ has an initial object.
\end{theorem}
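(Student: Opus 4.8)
The plan is to follow the recipe of the proof of \Cref{thm:initiality-syntax}, replacing each free construction by the corresponding equational one and, for the term structure, by the one additionally carrying substitution structure. As a preliminary reduction, \Cref{prop:admissible-parameterised-operators} lets us assume that $\sigtm$ has only unparameterised operators, so that \Cref{prop:msigtmsubst} applies and, over any cartesian $\sigty$-typed context structure, the free monad $\msigtmsubst$ on the relevant slice is available. Take $\denop{ty}\colon\msigty(S)\to S$ to be the initial $\msigty$-algebra, and let $\sct C$ be the initial cartesian $\sigty$-typed context structure (\Cref{prop:initial-cartesian-typed-context-structure}), whose sort set is $S$. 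The slice $\spsh C/S$ is cocomplete, hence has an initial object $0$, and the free $\msigtmsubst$-algebra on it, $\denop{tm}\colon\msigtmsubst(\tau\colon T\to S)\to(\tau\colon T\to S)$, is then the initial $\msigtmsubst$-algebra; in contrast with \Cref{thm:initiality-syntax}, no finitariness argument is required at this point, since $\msigtmsubst$ is already a monad and its initial algebra is simply $\msigtmsubst(0)$. By construction $T$ carries the variable and substitution structure and satisfies the equations of $\eqtm$, so $M\defeq(\sct C,\denop{ty},\tau\colon T\to S,\denop{tm})$ is a model of $\Sigma$.

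For the universal property, let $(\sct C',\denop{ty}',\tau'\colon T'\to S',\denop{tm}')$ be an arbitrary model of $\Sigma$. By initiality of $\sct C$ among cartesian $\sigty$-typed context structures there is a unique context structure homomorphism $(H,h)\colon(\sct C,S)\to(\sct C',S')$, whose sort-component $h$ is the unique $\msigty$-algebra homomorphism $S\to S'$. To obtain the term component, transport the $\msigtmsubst$-algebra structure of $T'$ along $(H,h)$: by the $\msigtmsubst$-analogue of \Cref{lemma:poly-commutes-with-hom}, the functor $\pb h\comp\precomp H\colon\spsh{C'}/S'\to\spsh C/S$ lifts to the categories of $\msigtmsubst$-algebras, so $\pb h(T'H)$ acquires a $\msigtmsubst$-algebra structure. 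Initiality of $T$ then supplies a unique $\msigtmsubst$-algebra homomorphism $f_\lrcorner\colon T\to\pb h(T'H)$, equivalently a unique heteromorphism $f\colon T\to T'H$. Since being a $\msigtmsubst$-algebra homomorphism after this transport is exactly what it means for $(H,h,f)$ to be a homomorphism of models of $\Sigma$ --- it packages together term--type coherence, the $\optm$-heteromorphism condition, preservation of the variable structure, and the substitution-heteromorphism condition, while the equations are satisfied in both models by hypothesis --- the triple $(H,h,f)$ is the unique homomorphism $M\to(\sct C',\denop{ty}',\tau'\colon T'\to S',\denop{tm}')$, and $M$ is initial.

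The one point carrying real content, and thus the expected \emph{main obstacle}, is the $\msigtmsubst$-analogue of \Cref{lemma:poly-commutes-with-hom} invoked above: one must show that $\pb h\comp\precomp H$ lifts to the Eilenberg--Moore categories of the two free monads $\msigtmsubst$. The $\optm$-part of this is precisely \Cref{lemma:poly-commutes-with-hom}; it then remains to verify compatibility with the adjoined structure. This amounts to: compatibility with variables, through the heteromorphism $v\colon V\to V'H$ induced by $H$ taking single-variable contexts to single-variable contexts; compatibility with substitution, using that a context structure homomorphism preserves the cartesian structure, so that $\precomp H$ commutes up to canonical isomorphism with the exponentials $(-)^{V_A}$ in terms of which substitution is formulated; and compatibility with the term equations, which is automatic because $\pb h\comp\precomp H$ carries the commuting diagrams witnessing satisfaction of an equation to the corresponding ones. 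Bundling these into a monad-compatible natural transformation over $\pb h\comp\precomp H$ is the substance of the argument; everything else is a routine transcription of the proof of \Cref{thm:initiality-syntax}.
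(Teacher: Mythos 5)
Your proposal is correct and follows essentially the same route as the paper, which proves this theorem only by remarking that the construction of \Cref{thm:initiality-syntax} goes through once \Cref{prop:msigtmsubst} is taken into account; your write-up is a faithful expansion of that sketch, including the reduction to unparameterised operators and the replacement of the initial-endofunctor-algebra argument by the free $\msigtmsubst$-algebra on the initial object. You correctly single out the $\msigtmsubst$-analogue of \Cref{lemma:poly-commutes-with-hom} as the one step with real content, which the paper likewise leaves implicit.
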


The initial object is the \emph{syntactic model}.  It is given by a construction analogous to the one in \Cref{thm:initiality-syntax}, taking \Cref{prop:msigtmsubst} into account.

\section{Classifying multicategories}
\label{sec:classifying-multicategories}

The classes of models we have considered so far are very general. First, contexts must be closed under extension, but may not necessarily be lists of sorts. More importantly, substitution is not inherent in simply typed syntax, which allowed us to consider models with and without substitution: it is only by making this distinction that we are able to prove metatheoretic properties regarding substitution, such as in \Cref{thm:substitution-lemma}. However, one typically wishes to consider simple type theories that do have an associated notion of substitution, along with contexts that are lists. In this setting, we can reformulate the models to be more familiar to the models dealt with in categorical algebra (see \eg{} \textcite{crole1993categories}).

\begin{definition}
\label{def:multisubstitutional-model}
A model of simple type theory is \emph{multisubstitutional} if
the embedding of the set of sorts in the category of contexts presents the latter as the strict cartesian completion of the former.
\end{definition}

Multisubstitutional models have list-like contexts and admit a multivariable substitution operation~\cite{fiore1999abstract} in addition to, and induced by, the single-variable substitution operation of \Cref{sec:substitution}.  In fact, we have the following result.

\begin{theorem}
\label{thm:equivalence-multicategories}
Multisubstitutional models of simple type theories with empty type operator signatures are equivalent to cartesian multicategories with corresponding
structure.
\end{theorem}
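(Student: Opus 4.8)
The strategy is to set up a functor in each direction between multisubstitutional models of a simple type theory $\Sigma$ with empty type operator signature (so the only type is the single sort $*$, and $S = 1$) and cartesian multicategories equipped with the corresponding algebraic term structure, and then check these are mutually quasi-inverse. First I would unwind what ``multisubstitutional model with empty type operator signature'' gives: by \Cref{def:multisubstitutional-model} the category of contexts $\sct C$ is the strict cartesian completion of $S = 1$, which by the description following \Cref{prop:initial-cartesian-typed-context-structure} is $\freecart{1} \simeq \FinSet\op$; so a context is just a natural number $n$ (a list of $n$ copies of $*$), and a term structure is a presheaf $T$ on $\FinSet\op$, i.e.\ a functor $\FinSet \to \Set$. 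The data $T(n)$ is exactly ``terms in $n$ free variables,'' and the variable presheaf $V$ becomes the representable giving the inclusion of variables. This is precisely the data of the underlying collection of a (cartesian) multicategory with one object: $\mathcal{M}(n;1) \defeq T(n)$, with the functoriality in $\FinSet$ encoding exactly the cartesian actions (permutation, diagonal, projection) of arguments.

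**From models to multicategories.** Given a multisubstitutional model, I would build a cartesian multicategory $\mathcal{M}$ with a single object whose multimorphisms $n \to 1$ are the elements of $T(n)$. Identities come from the variable-as-term structure $\var: V \to T$ of \Cref{sec:substitution} (the generic variable in context $1$). Multicategorical composition is assembled from the single-variable substitution operator $\subs$: to substitute $u_1, \dots, u_n$ (each in $m$ variables) into $t$ (in $n$ variables) one iterates $\subs$, using weakening and exchange (available here, \Cref{prop:admissible-parameterised-operators}) to manage contexts; the substitution equational laws in \Cref{sec:substitution} — trivial substitution, left/right identity, and associativity — are exactly what is needed to verify the multicategory axioms (unitality and associativity of composition) and compatibility with the cartesian actions. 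The algebraic term structure $\denop{o}^\sharp$ of each operator, together with the requirement that substitution commutes with every unparameterised operator (the last displayed diagram of \Cref{sec:substitution}), equips $\mathcal{M}$ with the ``corresponding structure'' — operations on hom-sets compatible with composition — and the term equations of $\sigtm$ become equations in $\mathcal{M}$. Homomorphisms of models translate to (structure-preserving) multifunctors: the data $f: T \to T'H$ with $H$ the identity on the one-object context category is exactly a map on multimorphisms, and the heteromorphism/substitution-preservation conditions give functoriality.

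**From multicategories to models.** Conversely, given a cartesian multicategory $\mathcal{M}$ with one object and the extra structure, set $S = 1$, $\sct C = \FinSet\op = \freecart 1$ with its canonical extension structure, and $T(n) \defeq \mathcal{M}(n;1)$; the cartesian actions make $T$ a presheaf, $\tau$ is forced, $\var$ is the identity on one variable, and $\subs$ is recovered as the special case of multicategorical composition substituting into a single argument (padding the others with projections/identities). One checks the substitution laws from the multicategory axioms, and reads off the operator structure $\denop{o}^\sharp$ from the corresponding structure on $\mathcal{M}$. Then I would verify these two constructions are inverse up to (coherent) isomorphism — essentially a bookkeeping argument, since in both directions the underlying data ($T(n)$ versus $\mathcal{M}(n;1)$) is literally the same, and both the substitution structure and the multicategory composition are generated from single-variable substitution plus the cartesian actions, so reconstructing one from the other round-trips to the identity.

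**Main obstacle.** The genuine work — and the step I expect to be delicate — is the precise matching of \emph{substitution structure} with \emph{multicategorical composition}. Multicategory composition is an $n$-fold simultaneous operation, whereas our primitive is single-variable substitution; reconciling them requires showing that the iterated single substitutions, mediated by the exchange/weakening/contraction maps coming from the cartesian structure of $\FinSet\op$, are independent of the order of iteration and satisfy the interchange law — and this is exactly where the associativity square and the strength-mediated coherence of \Cref{sec:substitution} get used in full. Dually, one must check that the general cartesian-multicategory axioms impose no more than these laws. Equivalently phrased: the content of the theorem is that ``cartesian multicategory'' is the right packaging of ``presheaf on $\FinSet\op$ with compatible single-variable substitution,'' and making that packaging precise (including on morphisms, where preservation of composition must be shown equivalent to the substitution-heteromorphism condition) is the crux; everything involving the operator signature $\sigtm$ and its equations then rides along formally, since operators and equations are transported verbatim as structure and axioms on the hom-sets.
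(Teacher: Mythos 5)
Your overall strategy coincides with the paper's sketched construction---multihoms given by terms in list contexts, identities from the variable structure, composition assembled from iterated single-variable substitution, cartesian actions from the functorial action of $T$ along exchange/weakening/contraction, and operators and equations transported as structure on the hom-sets---and you correctly isolate the genuinely delicate point, namely reconciling the single-variable substitution primitive with simultaneous multicategorical composition via the associativity and strength coherences of \Cref{sec:substitution}. However, there is a concrete error at the very first step that propagates through the whole argument: an empty type operator signature does \emph{not} force $S = 1$. With no type operators the monad $\msigty$ is the identity, so a $\msigty$-algebra is an \emph{arbitrary} set of sorts $S$, the category of contexts of a multisubstitutional model is $\freecart{S} \simeq \big((\FinSet \injto \Set) \comma S\big)\op$ rather than $\FinSet\op$, and the resulting cartesian multicategory is many-object, with object set $S$ and multihoms $\sct M(A_1,\ldots,A_n;B) = T_B(\tuple{A_1,\ldots,A_n})$. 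This is exactly how the paper states the correspondence, and it is confirmed by the surrounding discussion of many-sorted abstract clones and multisorted Lawvere theories as equivalent packagings. As written, your argument only establishes the monosorted special case (one-object cartesian multicategories, i.e.\ abstract clones).

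The repair is mechanical but must be carried through everywhere: $T$ is a presheaf on $\freecart{S}$ sliced over $S$, so the fibres $T_B$ replace your single $T(n)$; in the converse direction $\tau$ is not ``forced'' but is recovered by setting $T \defeq \coprod_{B \in S} T_B$ with $T_B(\tuple{A_1,\ldots,A_n}) \defeq \sct M(A_1,\ldots,A_n;B)$ and $\tau$ the evident projection to $B$; and for homomorphisms $H$ is not the identity but the functor freely induced by $h : S \to S'$, so that $f : T \to T'H$ is a sort-indexed family of maps on multihoms and the underlying multifunctor acts on objects by $h$. With these adjustments your argument lines up with the paper's; nothing in your treatment of substitution versus composition, or of the operator structure, depends essentially on monosortedness.
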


We sketch the idea. There is an equivalence taking such a multisubstitutional model 
$(\sct C, \tau : T \to S)$ to a cartesian multicategory $\sct M$, with object set $S$; multihoms 
$\sct M(A_1, \ldots, A_n; B) = T_B(\svc{A_1,\ldots,A_n})$; identities 
arising from
the variable structure; composition given by the multivariable substitution operation (or, equivalently, by iterated single-variable substitution); and cartesian multicategory structure given by the functorial action of the presheaf $T$ along the exchange, weakening, and contraction structure of 
$\sct C$. Model homomorphisms define cartesian multifunctors.

The algebraic structure on $T$ induces
structure on $\sct M$, where each term operator induces a pair of functors (corresponding to the premisses and conclusion), natural transformations between which correspond to 
interpretations of
the operator.

There are a variety of notions equivalent to cartesian multi\-categories, such as
many-sorted abstract clones and multi\-sorted Lawvere theories, giving corresponding versions of \Cref{thm:equivalence-multicategories} for each notion. Relevant for future work on polynomial models of dependent type theories is the relationship with categories with 
families~\cite{dybjer1995internal}.  We note that the relationship with simply-typed categories with families in \Cref{prop:simply-typed-cwfs} extends to incorporate operators and equations. However, care must be taken: in the context of categories with families, type theoretic structure is typically expressed through generalised algebraic theories, which permit operators that are not natural in a categorical sense; while, conversely, such unnatural operators are forbidden in the current framework.


Theorems~\ref{thm:initiality-theory} and~\ref{thm:equivalence-multicategories} provide a general systematic construction of the \emph{classifying cartesian multicategory} of any simple type
theory.
In the context of universal algebra, we have the following.

\begin{corollary}
The initial model of the simple type theory for an equational presentation in universal algebra is, equivalently, its abstract clone.
\end{corollary}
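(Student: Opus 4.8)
The plan is to unwind both sides and match them through \Cref{thm:equivalence-multicategories}. First I would fix the translation. An equational presentation $\mathcal{E}$ in universal algebra, with sorts $S$, function symbols, and equational axioms, determines a simple type theory $\Sigma_{\mathcal{E}}$ whose type operator signature consists of one nullary operator per sort and has no type equations, and whose equational term signature $\sigtm$ has, by \Cref{eg:equational-presentations}, one nonbinding unparameterised term operator per function symbol and one nonbinding term equation per axiom. Since the type operator signature is nullary and equation-free, its associated monad on $\Set$ takes $\emptyset$ to $S$; so by \Cref{thm:initiality-theory} and \Cref{prop:initial-cartesian-typed-context-structure} the initial model of $\Sigma_{\mathcal{E}}$ has sort set $S$ and category of contexts the free strict cartesian category $\freecart{S}$, with $\svc{-}$ the canonical embedding. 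By \Cref{def:multisubstitutional-model} this initial model is therefore \emph{multisubstitutional} (the nullary type operators serve only to pin the object set to $S$ and contribute no further structure on the multicategorical side), so \Cref{thm:equivalence-multicategories} applies to it.

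Second, I would transport the initial model across that equivalence. This produces a cartesian multicategory $\sct{M}$ with objects $S$, multihoms $\sct{M}(A_1,\ldots,A_n;B) = T_B(\svc{A_1,\ldots,A_n})$ for $T$ the term presheaf of the initial model, identities arising from the variable structure, composition given by iterated single-variable (equivalently, multivariable) substitution, and exchange, weakening and contraction given by the functorial action of $T$. Because each function symbol is nonbinding and unparameterised, the corresponding term operator induces under the equivalence simply a distinguished multimorphism of $\sct{M}$ of the matching arity, and each term equation becomes an equality between the derived multimorphisms it names. Hence $\sct{M}$ is exactly the initial cartesian multicategory equipped with an interpretation of the function symbols of $\mathcal{E}$ validating its axioms.

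Third, I would identify $\sct{M}$ with the abstract clone of $\mathcal{E}$. For this I would make the finitary free construction underlying \Cref{thm:initiality-theory}, quotiented as in \Cref{prop:msigtmsubst}, explicit --- precisely the coincidence with familiar syntactic constructions flagged in the discussion around \Cref{thm:substitution-lemma} --- so as to recognise $T_B(\svc{A_1,\ldots,A_n})$ as the set of $\mathcal{E}$-terms of sort $B$ over variables of sorts $A_1,\ldots,A_n$ modulo provable equality from the axioms, with variables acting as the clone projections and single-variable substitution as clone composition. This is, verbatim, the (many-sorted) abstract clone presented by $\mathcal{E}$; under the standard correspondence between cartesian multicategories and abstract clones recalled after \Cref{thm:equivalence-multicategories}, $\sct{M}$ is that clone. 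For mono-sorted $\mathcal{E}$ this specialises to the classical one-object abstract clone, equivalently the Lawvere theory of $\mathcal{E}$.

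I expect the main obstacle to be the third step: making the abstract construction of the initial $\msigtmsubst$-algebra concrete enough to be recognised, on the nose, as the term clone modulo equations. This means simultaneously unpacking the free-monad construction for the polynomial term functor, the adjoined variable-and-substitution structure, and the equational quotient, and then verifying that the resulting substitution action literally is clone composition. A secondary, essentially cosmetic, point is licensing the use of \Cref{thm:equivalence-multicategories} in the presence of the nullary type operators that encode the sorts of $\mathcal{E}$; this only requires the observation that a nullary type operator adjoins a constant object and nothing else.
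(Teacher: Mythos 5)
Your proposal is correct and follows essentially the route the paper intends: the corollary is stated as an immediate consequence of \Cref{thm:initiality-theory} and \Cref{thm:equivalence-multicategories} together with the identification of cartesian multicategories with many-sorted abstract clones, which is exactly the chain you spell out (with \Cref{eg:equational-presentations} supplying the translation of the presentation into an equational term signature). Your extra care over the nullary type operators encoding the sort set, and over making the initial $\msigtmsubst$-algebra explicit as the term clone modulo provable equality, only fills in details the paper leaves implicit.
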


Beyond universal algebra, we have a kind of ``generalised Lambek correspondence'' between models of simple type theories and structured cartesian multicategories.  When the simple type theory has finite products, the classifying cartesian multicategory is representable and hence equivalent to a cartesian category.  In particular, we recover the classical Lambek correspondence.

\begin{corollary}[Lambek correspondence]
\label{cor:lambek}
The initial model of the simple type theory for the \stlc{} with a set of base types $B$ is, equivalently, the free cartesian-closed category on $B$. 
\end{corollary}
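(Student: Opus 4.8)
The plan is to compose the initiality theorem (\Cref{thm:initiality-theory}), the multicategorical equivalence (\Cref{thm:equivalence-multicategories}), and the passage from representable cartesian multicategories to cartesian categories, reading off cartesian-closed structure at the end. First I would fix notation: let $\Sigma$ be the simple type theory for the \stlc{} with base types $B$ — type operators $\oper{Base}_i$ ($i \in B$), $\oper{Unit}$, $\oper{Prod}$, $\oper{Fun}$; term operators $\oper{u}$, $\oper{pair}$, $\oper{proj}_1$, $\oper{proj}_2$, $\oper{abs}$, $\oper{app}$; and the $\beta$- and $\eta$-equations for $\oper{Unit}$, $\oper{Prod}$ and $\oper{Fun}$ — and let $M$ be its initial model, which exists by \Cref{thm:initiality-theory}. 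By the construction underlying that theorem together with \Cref{thm:initiality-syntax,prop:initial-cartesian-typed-context-structure}, the carrier of the type algebra of $M$ is the set $S$ of \stlc{} types over $B$ and the category of contexts of $M$ is the free strict cartesian category $\freecart{S}$; hence $M$ is multisubstitutional (\Cref{def:multisubstitutional-model}), and therefore also initial among multisubstitutional $\Sigma$-models.

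Next I would invoke \Cref{thm:equivalence-multicategories}. Since that equivalence is stated for empty type operator signatures, I would first re-present $\Sigma$ over the now-fixed sort set $S$: the three type operators become fixed structure maps $\denop{Unit}$, $\denop{Prod}$, $\denop{Fun}$ on $S$, so the term operators of $\Sigma$ become the term operators of a simple type theory $\Sigma'$ over $S$ with empty type operator signature, and the initial $\Sigma'$-model over $S$ is $M$. Applying \Cref{thm:equivalence-multicategories} then identifies $M$ with the initial cartesian multicategory $\sct M$ carrying the structure induced by $\oper{u}$, $\oper{pair}$, $\oper{proj}_i$, $\oper{abs}$, $\oper{app}$ and their equations — the classifying cartesian multicategory of the \stlc{} — with object set $S$, multihoms $\sct M(A_1,\ldots,A_n; B) = T_B(\svc{A_1,\ldots,A_n})$, composition by multivariable substitution, and cartesian structure from the weakening/contraction/exchange action on $T$. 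The unit and product operators, with their $\beta\eta$-laws, make $\sct M$ representable: the context $A_1,\ldots,A_n$ is represented by $\oper{Prod}(A_1, \oper{Prod}(\cdots, \oper{Prod}(A_n, \oper{Unit})))$, with $\oper{pair}$/$\oper{u}$ supplying the universal multimap, the projections supplying its components, and the $\eta$-laws forcing uniqueness. A representable cartesian multicategory is equivalently a cartesian category, so $\sct M$ yields a cartesian category $\sct C$ whose objects are the \stlc{} types over $B$, still carrying the $\oper{abs}$/$\oper{app}$ structure.

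The function-type structure then becomes cartesian closure: representability turns $\oper{abs}$ and $\oper{app}$ into mutually inverse natural maps $\sct C(\Gamma \times \svc A, B) \cong \sct C(\Gamma, \oper{Fun}(A,B))$, with the $\beta$- and $\eta$-equations exactly the two equations exhibiting this as an adjunction $(-)\times\svc A \dashv \oper{Fun}(A,-)$; hence $\sct C$ is cartesian closed. For the universal property I would observe that, under \Cref{thm:equivalence-multicategories} and representability, a multisubstitutional $\Sigma$-model corresponds to a cartesian-closed category together with the type-algebra structure on its objects, which the product and function $\beta\eta$-laws force to coincide with the chosen terminal object, binary products and exponentials — so the only free data left is the interpretation of the base types, i.e.\ a function $B \to \mathrm{ob}$. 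Thus the category of multisubstitutional $\Sigma$-models is equivalent to that of cartesian-closed categories equipped with a function from $B$, and the initiality of $M$ transports to the statement that $\sct C$ is the free cartesian-closed category on $B$. This extends the preceding corollary (initial model $=$ abstract clone, for a universal-algebraic presentation) by the two further steps of representability and closure.

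The step I expect to be the main obstacle is the middle one: making precise the re-presentation of $\Sigma$ over a fixed sort set so that \Cref{thm:equivalence-multicategories} applies, and then verifying that the structure the theorem puts on $\sct M$ from $\oper{pair}$, $\oper{u}$, $\oper{proj}_i$, $\oper{abs}$, $\oper{app}$ biject, on a representable multicategory, with genuine cartesian-closed structure — in particular that the $\eta$-laws supply the uniqueness needed for products and for currying, and that the strictness of $\freecart{S}$ against the generically non-strict products and exponentials of a cartesian-closed category is handled (by working with chosen structure throughout, or by passing to equivalence). The representability construction and the adjunction bookkeeping are otherwise routine.
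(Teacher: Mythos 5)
Your proposal is correct and follows essentially the same route as the paper: compose \Cref{thm:initiality-theory} with \Cref{thm:equivalence-multicategories} to obtain the classifying cartesian multicategory, observe that the unit and product structure make it representable and hence a cartesian category, and read off closure from the function-type structure. The paper leaves the corollary as an immediate consequence of the preceding discussion, so your additional care about re-presenting the theory over a fixed sort set (to meet the empty-type-operator-signature hypothesis of \Cref{thm:equivalence-multicategories}) only fills in a detail the paper elides.
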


\bibliography{references}

\end{document}